\numberwithin{equation}{section}
\numberwithin{figure}{section}
\theoremstyle{plain}
\newtheorem{thm}{\protect\theoremname}[section]
\theoremstyle{plain}
\newtheorem{prop}[thm]{\protect\propositionname}
\theoremstyle{remark}
\theoremstyle{definition}
\newtheorem{defn}[thm]{\protect\definitionname}
\theoremstyle{plain}
\newtheorem{lem}[thm]{\protect\lemmaname}
\providecommand{\propositionname}{Proposition}
\providecommand{\remarkname}{Remark}
\providecommand{\theoremname}{Theorem}
\providecommand{\definitionname}{Definition}
\providecommand{\lemmaname}{Lemma}
\begin{document}

\title{Quadratic pseudospectrum for identifying localized states}

\author{Alexander Cerjan}
\email[]{awcerja@sandia.gov}
\affiliation{Center for Integrated Nanotechnologies\\
Sandia National Laboratories\\
Albuquerque, New Mexico 87185, USA}

\author{Terry Loring}
\email{loring@math.unm.edu}
 \affiliation{Department of Mathematics and Statistics\\
University of New Mexico\\
Albuquerque, New Mexico, 87123, USA}

\author{Fredy Vides}
 \email{fredy.vides@unah.edu.hn}
 
\affiliation{ 
Scientific Computing Innovation Center\\
School of Mathematics and Computer Science\\
Universidad Nacional Aut\'onoma de Honduras\\
Tegucigalpa, Honduras
}

\date{\today}
\begin{abstract}
We examine the utility of the quadratic pseudospectrum in photonics and condensed matter. Specifically, the quadratic pseudospectrum represents a method for approaching systems with incompatible observables, as it both minimizes the ``eigen-error'' in the joint approximate spectrum of the incompatible observables and does not increase the system's computational complexity. Moreover, we derive an important estimate relating the Clifford and quadratic pseudospectra. Finally, we prove that the quadratic pseudospectrum is local, and derive the bounds on the errors that are incurred by truncating the system in the vicinity of where the pseudospectrum is being calculated.
\end{abstract}

\maketitle

\section{Introduction}

Many fields in modern physics are faced with the challenge of trying to glean information from incompatible observables in a system. Although this problem is most closely associated with the Heisenberg uncertainty principle in quantum mechanics, it also commonly manifests in classical systems governed by a wave equation. As an example, consider a point defect in a crystalline lattice that can host localized states \cite{harrison_solid_1980}. For such a system, the most important questions are: (1) what are the energies of the defect states, and (2) what is their spatial extent. Unfortunately, these questions correspond to incompatible observables, as in general the Hamiltonian for a crystal, $H$, does not commute with the crystal's position operators, $P_i$ with $i = x,y,z$, $[H, P_i] \ne 0$. Typically, this problem is approached by first finding the spectrum of $H$ over some suitably large volume containing a single defect to find the energies of the defect's states. If the energy of a defect state, $|\psi_{\textrm{def}} \rangle$, is in a bulk band gap of the surrounding crystal, a measure of the location and localization of this state can then be determined using moments of the state's position expectation values, $\langle \psi_{\textrm{def}}| P_i^n |\psi_{\textrm{def}} \rangle$. However, when a defect state's energy is within the extent of the crystal's bulk bands, this approach is no longer possible, as any state associated with the defect is now a member of a large degenerate subspace without a discriminant for choosing a preferred basis. (Different choices of basis will yield different position expectation values.)

One approach for finding an approximate joint eigenspectrum between non-commuting operators is to study the system using pseudospectral methods that do not require directly measuring any of the system's incompatible observables individually. Intuitively, this approach is based on constructing a single composite operator out of the various eigenvalue problems, $(X_j- \lambda_j)\boldsymbol{v}$, for each of the relevant non-commuting operators, $X_j$, and then analyzing the spectrum (or related aspects) of this composite operator. One example of such a composite operator is the \textit{localizer} \cite{LoringPseudospectra},
\begin{equation}
    L_{\boldsymbol{\lambda}}(X_1,\cdots,X_{d})
    =\sum_{j=1}^{d} (X_{j}-\lambda_{j})\otimes\Gamma_{j}, \label{eq:locDef}
\end{equation}
which combines the underlying eigenvalue equations using a non-trivial Clifford representation, $\Gamma_j^\dagger = \Gamma_j$, $\Gamma_j^2 = I$, and $\Gamma_j \Gamma_l = -\Gamma_l \Gamma_j$ for $j \ne l$. Here, we assume that $X_j$ are Hermitian, so that $\lambda_j \in \mathbb{R}$ and thus
\begin{equation*}
    \boldsymbol{\lambda} = (\lambda_1,\cdots,\lambda_d) \in \mathbb{R}^{d}.
\end{equation*}
We want to know if $L_{\boldsymbol{\lambda}}(X_1,\cdots,X_{d})$ is singular, and it not, how far it deviates from singular.
Note that there has been a subtle shift in the treatment of $\lambda_j$ in Eq.\ (\ref{eq:locDef}) as compared to its use in an eigenvalue equation. Whereas in an eigenvalue problem, $(X_j- \lambda_j)\boldsymbol{v} = 0$, $\lambda_j$ is something that is calculated using a known operator (i.e., $\lambda_j$ is a dependent variable), in a composite operator $\boldsymbol{\lambda}$ is better thought of as an input (i.e., as a set of independent variables). This is because, no matter how close the localizer is to being singular, its spectrum contains valuable information on the durability of edge modes \cite{michalaLorWat2020wavePropagation} or on the number of Dirac or Weyl points \cite{schulz-baldes_spectral_2022}.

Even when there is no topology to study, the utility of composite operators is providing for incompatible observables some metric for the inherent uncertainty in joint measurement given a state whose expection in these observables is close to $\boldsymbol{\lambda}$. For example, at a given $\boldsymbol{\lambda}$ the localizer can be used to define the \textit{localizer gap} of $(X_1,\cdots,X_{d})$ as
\begin{equation*}
    \mu_{\boldsymbol{\lambda}}^{C}(X_{1},\cdots,X_{d})=\sigma_{\min}\left(L_{\boldsymbol{\lambda}}(X_1,\cdots,X_{d}) \right).
\end{equation*}
Here we use $\sigma_{\min}(A)$ to denote the smallest singular value of $A$. (Mainly we use this in the case where $A$ is Hermitian and so $\sigma_{\min}(A)$ is the smallest absolute value of an eigenvalue.) Finally, the localizer can be used to define the \textit{Clifford $\epsilon$-pseudospectrum} \cite[\S 1]{LoringPseudospectra}, 
\begin{equation*}
    \Lambda_{\epsilon}^{C}(X_1,\cdots,X_{d}) = \{ \boldsymbol{\lambda} \; | \; \mu_{\boldsymbol{\lambda}}^{C}(X_{1},\cdots,X_{d}) \le \epsilon \} 
\end{equation*}
which is a closed subset of $\mathbb{R}^{d}$.  When $\epsilon=0$ this set is known simply as the Clifford spectrum.  
Those $\boldsymbol{\lambda}$ that yield small localizer gaps correspond to joint approximate eigenvalues of $(X_1,\cdots,X_{d})$. Thus, the utility of composite operators and pseudospectral methods can be understood as enabling the simultaneous, but approximate, joint measurement of many incompatible observables. Note though, that even if  $\boldsymbol{\lambda}$ is a member of the Clifford spectrum of $(X_1,\cdots,X_{d})$, it does not follow that $\lambda_j$ is necessarily an eigenvalue of $X_j$.

\begin{figure}[t]
    \centering
   \includegraphics{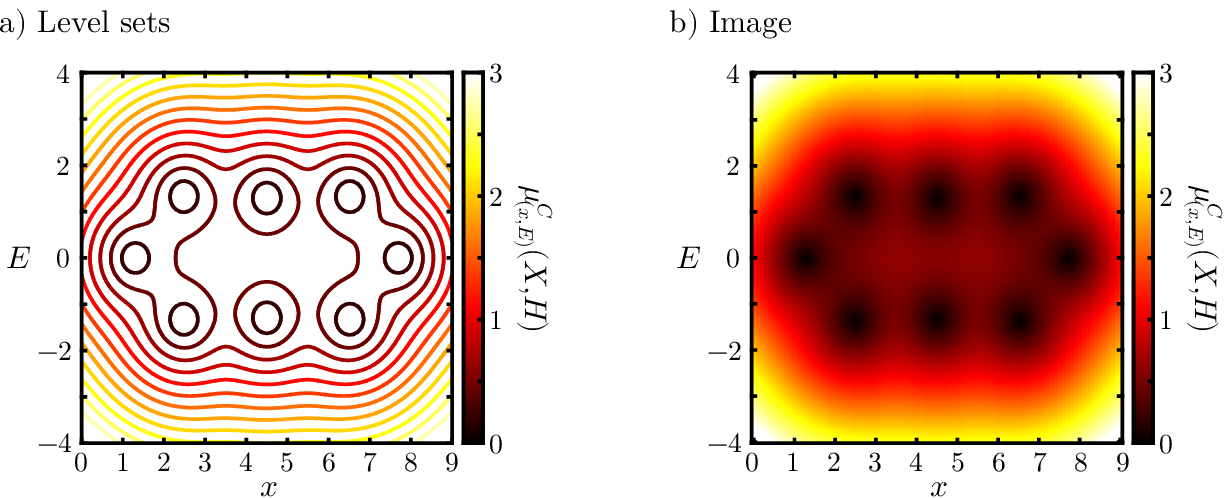}
    \caption{The Clifford pseudospectrum associated to the two observables defining  a basic SSH model.  This is shown as the traditional level sets and as an image with artificial color.  The value of the indicator function hits zero at eight points, two of which correspond to defect states at zero energy.
    \label{fig:Basic_SSH_PS}}
\end{figure} 

Traditionally, pseudospectra are displayed by curves indicating the boundaries of several different $\epsilon$-pseudospectra. These are the level curves of the function
\begin{equation*}
    \boldsymbol{\lambda} \mapsto \mu_{\boldsymbol{\lambda}}^{C}(X_{1},\cdots,X_{d})
\end{equation*}
We refer to this as the \textit{indicator function} for the Clifford pseudospectrum.  We prefer to display the indicator function as an image 
as this is closer to how most synthetic and experimental data are presented in
physics.  In Figure~\ref{fig:Basic_SSH_PS} we illustrate the 
two-variable Clifford pseudospectrum for $(X,H)$, the Hamiltonian and  position
observable for a standard finite Su-Schrieffer-Heeger (SSH) model \cite{su1979solitons} (see, for example, \cite[\S 1.1]{asboth2016short_course_top_ins}).
\begin{equation}
\label{eqn:Basic_SSH_matrices}
  H=\left[\begin{array}{cccccccc}
0 & v\\
v & 0 & w\\
 & w & 0 & v\\
 &  & v & 0 & w\\
 &  &  & w & 0 & v\\
 &  &  &  & v & 0 & w\\
 &  &  &  &  & w & 0 & v\\
 &  &  &  &  &  & v & 0
\end{array}\right],\quad X=\left[\begin{array}{cccccccc}
1\\
 & 2\\
 &  & 3\\
 &  &  & 4\\
 &  &  &  & 5\\
 &  &  &  &  & 6\\
 &  &  &  &  &  & 7\\
 &  &  &  &  &  &  & 8
\end{array}\right]
\end{equation}
with $ v=0.7$ and $w=1.4$.  This will have a defect state at each end reflecting its non-trivial $K$-theory. 

Previously, the Clifford pseudospectrum has attracted interest due to its close connection with a system's $K$-theory \cite{LoringSchuBa_odd,LoringSchuBa_even}, which can be used to diagnose a material's topological properties \cite{xiao_berry_2010,hasan_colloquium:_2010,chiu_classification_2016,bansil_colloquium_2016,ozawa_topological_2019}. When $X_j$ are $H$ and $P_i$ for a crystal, the localizer can be used to determine the crystal's topology in all (physical) dimensions and every symmetry class, regardless of whether the system exhibits a bulk bandgap \cite{cerjan_local_2021}. Moreover, changes in the system's topology can only occur when the localizer gap closes, i.e., at $\boldsymbol{\lambda}$ where $\mu_{\boldsymbol{\lambda}}^{C}(X_{1},\cdots,X_{d})=0$. However, despite these beneficial properties of the localizer, its structure necessarily increases the computational complexity of solving for the system's approximate spectrum beyond the complexity of finding the Hamiltonian's exact spectrum due to the need to tensor the constituent operators with a Clifford representation. (This can be especially problematic in 3D systems where an $n$-by-$n$-by-$n$ system's localizer has size at least $4n^3$-by-$4n^3$.) The increased computational complexity of the localizer presents two related questions: (1) Given that a wide range of physical systems do not possess non-trivial topology, is there a composite operator which can be used to efficiently solve for approximate joint spectra? (2) Even for systems that exhibit topological behaviors where the localizer might be necessary, can we estimate the size of the localizer gap using the pseudospectrum of a composite operator with the same size as the system's Hamiltonian?

Here, we establish the physical relevance of the \textit{quadratic composite operator},
\begin{equation*}
Q_{\boldsymbol{\lambda}}(X_1,\cdots,X_{d})
    =\sum_{j=1}^{d} (X_{j}-\lambda_{j})^2,
\end{equation*}
which can be used to calculate what is called the quadratic pseudospectrum \cite{LoringVides2022PJADE}. The quadratic pseudospectrum is a collection of sets determined by an indicator function, defined below. The value of this indicator function at $\boldsymbol{\lambda}$ tells us about how small the variances in all the observables can be for a state centered at $\boldsymbol{\lambda}$. By centered, we mean $|\psi \rangle$ has expectation value $\lambda_j$ in the $j$th observable. The value of the indicator function can be computed in terms of an eigenvalue of $Q_{\boldsymbol{\lambda}}$. Moreover, we provide a bound on the maximum difference between the smallest singular values of the quadratic composite operator and the localizer. Finally, we prove a number of results that shows the quadratic pseudospectrum is well-behaved in a number of ways that will enable the development of efficient algorithms for its calculation.

The remainder of this paper is organized as follows. In Sec.\ \ref{sec:qps} we provide some basic definitions and results for the quadratic pseudospectrum. In particular, we establish the upper bound between the quadratic and Clifford pseudospectra, proving that at $\boldsymbol{\lambda}$ with sufficiently large quadratic gaps, the localizer gap cannot be zero.  In Sec.\ \ref{sec:Ktheory} we explain why we cannot ignore the Clifford pseudospectrum, namely that the localizer contains $K$-theory while the quadratic composite operator does not. In Sec.\ \ref{sec:examples} we provide some simple examples using small matrices of the behavior of the quadratic and Clifford pseudospectra. In Sec.\ \ref{sec:bulkEdge} we provide some more physically motivated examples of the behavior of both pseudospectra in physical systems with non-trivial topology. In Sec.\ \ref{sec:locality} we prove results regarding the error induced by truncating the spacial extent of the system. Finally, in Sec.\ \ref{sec:conc} we offer some concluding remarks.

\section{Properties of the quadratic pseudospectrum \label{sec:qps}}

We recall that a unit vector $\boldsymbol{v}$ in Hilbert space determines a probability distribution with respect to a Hermitian matrix (observable) $X$.  We do not need access to the full distribution, but only its expectation and variance.  This will be enough for us to talk in fuzzy terms about the location of a state or its approximate energy.  In mathematical notation,
the expectation is
\begin{equation*}
    \textrm{E}_{\boldsymbol{v}}[X] = \langle X \boldsymbol{v}, \boldsymbol{v} \rangle .
\end{equation*}
In physics notation the expectation would be written
\begin{equation*}
    \langle X \rangle_\psi = \langle \psi | X | \psi \rangle .
\end{equation*}
The square of the variance is
\begin{equation*}
    \Delta^2_{\boldsymbol{v}}X = \langle X^2 \boldsymbol{v}, \boldsymbol{v} \rangle - \langle X \boldsymbol{v}, \boldsymbol{v} \rangle^2 .
\end{equation*}
In physics notation, that would look like
\begin{equation*}
    \Delta^2_\psi X =  \langle \psi | X^2 | \psi \rangle - \langle \psi | X | \psi \rangle^2.
\end{equation*}

If we cannot find an eigenvector and eigenvalue, we can try instead to make 
$ \left\Vert A\boldsymbol{v}-\lambda\boldsymbol{v}\right\Vert  $
as small as possible.  This ``eigen-error'' occurs naturally in mathematics so long as $\boldsymbol{v}$ is a unit vector. Assuming as well that $\lambda$ is real, we compute 
\begin{equation*}
\left\Vert X\boldsymbol{v}-\lambda\boldsymbol{v}\right\Vert ^{2}  = \left\langle X^{2}\boldsymbol{v},\boldsymbol{v}\right\rangle -2\lambda\left\langle X\boldsymbol{v},\boldsymbol{v}\right\rangle +\lambda^{2} 
  =\left(\left\langle X^{2}\boldsymbol{v},\boldsymbol{v}\right\rangle -\left\langle X\boldsymbol{v},\boldsymbol{v}\right\rangle ^{2}\right)+\left(\left\langle X\boldsymbol{v},\boldsymbol{v}\right\rangle -\lambda\right)^{2}
\end{equation*}
we are able to rewrite this expression as
\begin{equation}
\label{eqn:eigenerror-rewritten}
\left\Vert X\boldsymbol{v}-\lambda\boldsymbol{v}\right\Vert ^{2} =
\Delta_{\boldsymbol{v}}^{2}X + \left(\textrm{E}_{\boldsymbol{v}}[X] - \lambda\right)^{2}
\end{equation}
which is the square sum of the variance and the
displacement of the expectation from what we were expecting.

As in general $[X_j,X_l] \ne 0$, it is impossible to find exact joint eigenvectors of these operators. The quadratic pseudospectrum instead provides a measurement of how close we can get to a joint eigenvector. The following is built on the ideas in  \cite[\S~6]{schneiderbauer2016Quasicoherent_states}.  This theorem relates physically relevant features of states to matrix computations that allow for reasonably fast numerical algorithms.

\begin{prop} \label{prop:quad_gap_four_ways}
Suppose $X_{1},\dots,X_{d}$ are $n$-by-$n$ Hermitian matrices and that $\boldsymbol{\lambda}$ is an element of $\mathbb{R}^{d}$.
The following quantities are always equal:
\begin{enumerate}
\item the minimum of
\begin{equation*}
\sqrt{\sum_{j=1}^{d}\left\Vert X_{j}\boldsymbol{v}-\lambda_{j}\boldsymbol{v}\right\Vert ^{2}}
\end{equation*}
as $\boldsymbol{v}$ ranges over all unit vectors in $\mathbb{C}^{n}$,
\item the minimum of
\begin{equation*}
\sqrt{\sum_{j=1}^{d}\Delta_{\boldsymbol{v}}^{2}X_{j}+\sum_{j=1}^{d}\left(\textrm{E}_{\boldsymbol{v}}[X_j] - \lambda\right)^{2}}
\end{equation*}
as $\boldsymbol{v}$ ranges over all unit vectors in $\mathbb{C}^{n}$,
\item the smallest singular value of 
\begin{equation}
\label{eqn:Tall_skinny_composite}
M_{\boldsymbol{\lambda}}(X_1,\cdots,X_{d}) =
\left[\begin{array}{c}
X_{1}-\lambda_{1}\\
X_{2}-\lambda_{2}\\
\vdots\\
X_{d}-\lambda_{d}
\end{array}\right],
\end{equation}
\item the square root of the smallest absolute value of an eigenvalue of $Q_{\boldsymbol{\lambda}}(X_1,\cdots,X_{d})$.
\end{enumerate}
Moreover, a unit vector is a right singular vector of \textup{(\ref{eqn:Tall_skinny_composite})} iff it is an eigenvector of $Q_{\boldsymbol{\lambda}}(X_1,\cdots,X_{d})$, iff it minimizes the quantity in \textup{(1)}, iff it minimizes the quantity in \textup{(2)}.
\end{prop}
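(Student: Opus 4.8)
The plan is to establish a chain of equalities by reducing everything to the singular value decomposition of the tall matrix $M_{\boldsymbol{\lambda}}$. The central observation is that for any unit vector $\boldsymbol{v}$,
\begin{equation*}
\left\Vert M_{\boldsymbol{\lambda}}\boldsymbol{v}\right\Vert^{2}
=\sum_{j=1}^{d}\left\Vert (X_{j}-\lambda_{j})\boldsymbol{v}\right\Vert^{2},
\end{equation*}
since stacking vectors concatenates their coordinates and the squared norm of the stack is the sum of the squared norms of the blocks. This single identity ties together quantities (1) and (3): minimizing $\sum_{j}\left\Vert X_{j}\boldsymbol{v}-\lambda_{j}\boldsymbol{v}\right\Vert^{2}$ over unit vectors is exactly minimizing $\left\Vert M_{\boldsymbol{\lambda}}\boldsymbol{v}\right\Vert^{2}$, whose minimum value is the square of $\sigma_{\min}(M_{\boldsymbol{\lambda}})$ by the variational characterization of the smallest singular value.

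Next I would connect (3) and (4). A direct computation gives
\begin{equation*}
M_{\boldsymbol{\lambda}}^{\dagger}M_{\boldsymbol{\lambda}}
=\sum_{j=1}^{d}(X_{j}-\lambda_{j})^{\dagger}(X_{j}-\lambda_{j})
=\sum_{j=1}^{d}(X_{j}-\lambda_{j})^{2}
=Q_{\boldsymbol{\lambda}}(X_1,\cdots,X_{d}),
\end{equation*}
where the middle step uses that each $X_{j}-\lambda_{j}$ is Hermitian (as $X_{j}$ is Hermitian and $\lambda_{j}$ is real). Since the squared singular values of any matrix $M$ are exactly the eigenvalues of $M^{\dagger}M$, and $Q_{\boldsymbol{\lambda}}$ is positive semidefinite, its eigenvalues are nonnegative; hence the smallest absolute value of an eigenvalue of $Q_{\boldsymbol{\lambda}}$ equals its smallest eigenvalue, which equals $\sigma_{\min}(M_{\boldsymbol{\lambda}})^{2}$. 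Taking square roots identifies (3) with (4). For the equality of (1) and (2), I would simply apply the already-established identity (\ref{eqn:eigenerror-rewritten}) term by term in the sum over $j$, so that the two expressions under the square roots are literally equal for every unit vector $\boldsymbol{v}$, not merely at the minimizer; their minima therefore coincide.

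Finally, for the ``moreover'' clause characterizing the minimizers, I would argue that all four conditions single out the same set of unit vectors. A right singular vector of $M_{\boldsymbol{\lambda}}$ associated to $\sigma_{\min}$ is by definition a unit eigenvector of $M_{\boldsymbol{\lambda}}^{\dagger}M_{\boldsymbol{\lambda}}=Q_{\boldsymbol{\lambda}}$ for its smallest eigenvalue, which gives the equivalence between being such a singular vector and being a minimizing eigenvector of $Q_{\boldsymbol{\lambda}}$. Since the function $\boldsymbol{v}\mapsto\langle Q_{\boldsymbol{\lambda}}\boldsymbol{v},\boldsymbol{v}\rangle$ attains its constrained minimum over unit vectors precisely on the eigenspace of the smallest eigenvalue (a standard Rayleigh-quotient fact), and this function equals both the quantity squared in (1) and the quantity squared in (2), the minimizers of (1) and (2) coincide with that eigenspace as well. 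I do not anticipate a genuine obstacle here; the only point requiring care is the Rayleigh-quotient characterization of \emph{all} minimizers rather than just the minimum value, which is where one must invoke that the minimum of $\langle Q_{\boldsymbol{\lambda}}\boldsymbol{v},\boldsymbol{v}\rangle$ is attained exactly on the bottom eigenspace of the Hermitian operator $Q_{\boldsymbol{\lambda}}$.
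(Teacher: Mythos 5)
Your proof is correct and follows essentially the same route as the paper's: the eigen-error identity (\ref{eqn:eigenerror-rewritten}) applied termwise for the equality of (1) and (2), the factorization $M_{\boldsymbol{\lambda}}^{\dagger}M_{\boldsymbol{\lambda}}=Q_{\boldsymbol{\lambda}}$ together with the relation between singular values of $M_{\boldsymbol{\lambda}}$ and eigenvalues of $Q_{\boldsymbol{\lambda}}$ for (3) and (4), and the variational characterization $\sigma_{\min}(A)=\min_{\|\boldsymbol{v}\|=1}\|A\boldsymbol{v}\|$ combined with the block-stacking norm identity for (1) and (3). The only difference is that you spell out the Rayleigh-quotient fact that the minimum is attained exactly on the bottom eigenspace, which the paper leaves implicit in its brief treatment of the ``moreover'' clause.
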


\begin{proof}
The equality of (1) and (2) follows from (\ref{eqn:eigenerror-rewritten}).
The equality of (3) and (4) follows from
\begin{equation*}
    (M_{\boldsymbol{\lambda}}(X_1,\cdots,X_{d}))^\dagger M_{\boldsymbol{\lambda}}(X_1,\cdots,X_{d}) = Q_{\boldsymbol{\lambda}}(X_1,\cdots,X_{d})
\end{equation*}
and the fact that
$   \sigma_{\min}(A^\dagger A) = (\sigma_{\min}(A))^2$
for any matrix $A$.
The final part of the argument uses a characterisation of the smallest singular value of a matrix $A$  (see \cite[Thm.~8.6.1]{Golub_VanLoan_MatrixComp})
as
\begin{equation*}
    \sigma_{\min}(A) = \min_{\|\boldsymbol{v}\|=1} \|A\boldsymbol{v}\|
\end{equation*}
and the routine calculation
\begin{equation*}
    \| M_{\boldsymbol{\lambda}}(X_1,\cdots,X_{d}) \boldsymbol{v} \| = \sqrt{\sum \|X_j \boldsymbol{v} - \lambda_j \boldsymbol{v} \|^2}
\end{equation*}
\end{proof}

\begin{defn}
Suppose $X_{1},\dots,X_{d}$ are Hermitian matrices in $\boldsymbol{M}_{n}(\mathbb{C})$.
For every 
$\boldsymbol{\lambda}=\left(\lambda_{1},\dots,\lambda_{d}\right)$ in $\mathbb{R}^{d}$
we define the \emph{quadratic gap }of $(X_{1},\dots,X_{d})$ at
$\boldsymbol{\lambda}$ as
\begin{equation*}
    \mu_{\boldsymbol{\lambda}}^{Q}(X_{1},\dots,X_{d}) = \left(\sigma_{\min}(Q_{\boldsymbol{\lambda}}(X_1,\cdots,X_{d}))\right)^{\frac{1}{2}}.
\end{equation*}
The \textit{quadratic $\epsilon$-pseudospectrum} of $(X_{1},\dots,X_{d})$
is defined as the set
\begin{equation*}
    \Lambda_{\varepsilon}^{Q}(X_1,\cdots,X_{d}) = \{ \boldsymbol{\lambda} \; | \; \mu_{\boldsymbol{\lambda}}^{Q}(X_{1},\cdots,X_{d}) \le \varepsilon \} .
\end{equation*}
  When $\epsilon=0$ this set is known simply as the quadratic spectrum.  
We call the function 
\begin{equation*}
\boldsymbol{\lambda}\mapsto\mu_{\boldsymbol{\lambda}}^{Q}(X_{1},\dots,X_{d})
\end{equation*}
 the \textit{indicator function} of the quadratic pseudospectrum.
\end{defn}

\begin{figure}[t]
    \centering
   \includegraphics{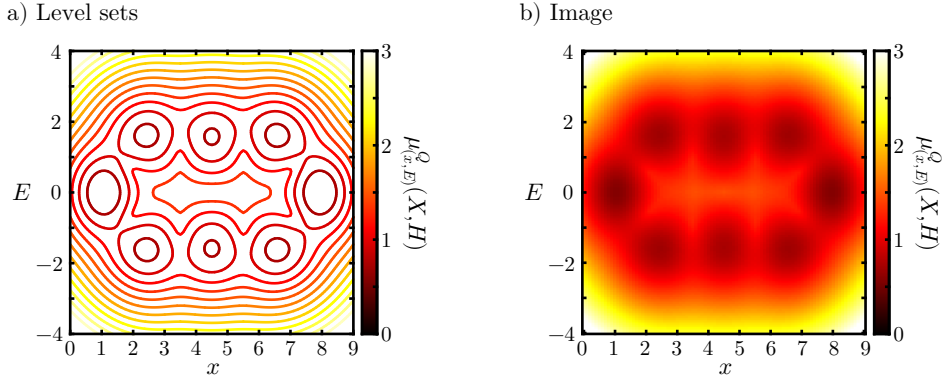}
    \caption{The quadratic pseudospectrum associated to the two observables defining  a basic SSH model.  This is shown as the traditional level sets and as an image with artificial color.  The value of the indicator function is never zero.
    \label{fig:Basic_SSH_Quad_PS}}
\end{figure}

Some basic results about eigenvectors can be tweaked to work with approximate eigenvectors. The following lemma is an example, a modification of the usual fact that for Hermitian matrices, different eigenspaces are orthogonal.  Notice there is no sensible interpretation of a subspace of approximate eigenvectors for a fixed scalar.  We omit the proof as it is an easy modification of a standard short proof.

\begin{lem}
Suppose that $A$ is a Hermitian matrix and that $\boldsymbol{v}$ and $\boldsymbol{w}$
are unit vectors and $\lambda \neq \mu$  are two real numbers.  Then
\begin{equation*}
   \left|\left\langle \boldsymbol{v},\boldsymbol{w}\right\rangle \right| \leq
   \frac{\left\Vert A\boldsymbol{v}-\lambda\boldsymbol{v}\right\Vert +\left\Vert A\boldsymbol{w}-\mu\boldsymbol{w}\right\Vert }{|\lambda-\mu|}. 
\end{equation*}
\end{lem}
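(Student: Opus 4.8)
The plan is to imitate the textbook proof that eigenvectors of a Hermitian matrix belonging to distinct eigenvalues are orthogonal, but to retain the error terms rather than discard them. First I would introduce the two residual vectors $\boldsymbol{r} = A\boldsymbol{v}-\lambda\boldsymbol{v}$ and $\boldsymbol{s} = A\boldsymbol{w}-\mu\boldsymbol{w}$, so that $\|\boldsymbol{r}\|$ and $\|\boldsymbol{s}\|$ are exactly the two eigen-errors appearing in the numerator of the claimed bound, and so that $A\boldsymbol{v}=\lambda\boldsymbol{v}+\boldsymbol{r}$ and $A\boldsymbol{w}=\mu\boldsymbol{w}+\boldsymbol{s}$.

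Next I would evaluate $\langle A\boldsymbol{v},\boldsymbol{w}\rangle$ in two different ways. Expanding $A\boldsymbol{v}$ directly and using linearity in the first slot gives $\langle A\boldsymbol{v},\boldsymbol{w}\rangle = \lambda\langle\boldsymbol{v},\boldsymbol{w}\rangle + \langle\boldsymbol{r},\boldsymbol{w}\rangle$. On the other hand, moving $A$ across the inner product by Hermiticity and then expanding $A\boldsymbol{w}$ gives $\langle A\boldsymbol{v},\boldsymbol{w}\rangle = \langle\boldsymbol{v},A\boldsymbol{w}\rangle = \mu\langle\boldsymbol{v},\boldsymbol{w}\rangle + \langle\boldsymbol{v},\boldsymbol{s}\rangle$, where I use that $\mu$ is real so that it pulls out of the second slot without conjugation. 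Equating the two expressions and collecting the $\langle\boldsymbol{v},\boldsymbol{w}\rangle$ terms yields the key identity
\[
(\lambda-\mu)\langle\boldsymbol{v},\boldsymbol{w}\rangle = \langle\boldsymbol{v},\boldsymbol{s}\rangle - \langle\boldsymbol{r},\boldsymbol{w}\rangle .
\]

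Finally I would take absolute values, apply the triangle inequality and then Cauchy--Schwarz to each term, and use that $\boldsymbol{v}$ and $\boldsymbol{w}$ are unit vectors to obtain $|\lambda-\mu|\,|\langle\boldsymbol{v},\boldsymbol{w}\rangle| \le \|\boldsymbol{s}\| + \|\boldsymbol{r}\|$. Dividing by $|\lambda-\mu|$, which is nonzero by hypothesis, produces the stated inequality. There is no genuine obstacle here: the only point requiring care is the bookkeeping of which scalar pulls out of which slot of the inner product, and this is harmless precisely because $\lambda$ and $\mu$ are real and because the concluding step involves only magnitudes. As a sanity check, setting $\boldsymbol{r}=\boldsymbol{s}=\boldsymbol{0}$ forces $\langle\boldsymbol{v},\boldsymbol{w}\rangle = 0$, recovering the classical orthogonality of eigenspaces and confirming that the inequality degenerates correctly in the exact case.
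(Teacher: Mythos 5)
Your proof is correct, and it is precisely the ``easy modification of a standard short proof'' that the paper alludes to when it omits the argument: derive $(\lambda-\mu)\langle\boldsymbol{v},\boldsymbol{w}\rangle = \langle\boldsymbol{v},\boldsymbol{s}\rangle - \langle\boldsymbol{r},\boldsymbol{w}\rangle$ from Hermiticity and the reality of $\lambda,\mu$, then apply the triangle and Cauchy--Schwarz inequalities. Nothing to add; your check of the degenerate case $\boldsymbol{r}=\boldsymbol{s}=\boldsymbol{0}$ correctly recovers the classical orthogonality statement the lemma generalizes.
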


A strange thing is that the  quadratic spectrum of $(X_{1},\dots,X_{d})$ is often
the empty set.  One example of this is worked out in 
\cite{DebLorSverSurfaces}.  We find  another example now by working out the quadratic pseudospectrum for the matrices
from (\ref{eqn:Basic_SSH_matrices}), as illustrated in Figure~\ref{fig:Basic_SSH_Quad_PS}.

However, we are most interested in the places where the quadratic
spectrum is close to or at a local minimum value even if that minimum is not zero.

There are instances where one needs to compute both the Clifford and the quadratic pseudospectra.  In general, the
Clifford pseudospectrum is more interesting at $\boldsymbol{\lambda}$ where $\mu_{\boldsymbol{\lambda}}^{C}$ is large.  One can use the faster quadratic pseudospectrum
to determine where to avoid computing the Clifford pseudospectrum whenever these two are close.  The following
is thus a useful bound.

\begin{prop} \label{prop:Diff_PS_bound}
If $X_{1},\dots,X_{d}$ are Hermitian matrices in $\boldsymbol{M}_{n}(\mathbb{C})$
and 
$\boldsymbol{\lambda} \in\mathbb{R}^{d}$
then
\begin{equation}
\left|\left(\mu_{\boldsymbol{\lambda}}^{Q}(X_{1},\dots,X_{d})\right)^2-\left(\mu_{\boldsymbol{\lambda}}^{C}(X_{1},\dots,X_{d})\right)^2\right|
\leq
\sum_{j<k}\left\Vert \left[X_{j},X_{k}\right]\right\Vert .
\label{PS_to_PS}
\end{equation}
\end{prop}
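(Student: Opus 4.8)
The plan is to exploit the algebraic identity obtained by squaring the localizer $L_{\boldsymbol{\lambda}}$, which expresses $L_{\boldsymbol{\lambda}}^{2}$ as a copy of the quadratic operator plus a perturbation controlled entirely by the commutators, and then to invoke a standard eigenvalue perturbation bound (Weyl's inequality). First I would reduce both sides of (\ref{PS_to_PS}) to smallest eigenvalues of positive semidefinite Hermitian matrices. Because $Q_{\boldsymbol{\lambda}}(X_1,\dots,X_d)$ is Hermitian and positive semidefinite, its smallest singular value is its smallest eigenvalue, so $\left(\mu_{\boldsymbol{\lambda}}^{Q}\right)^{2}=\lambda_{\min}(Q_{\boldsymbol{\lambda}})$. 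Because $L_{\boldsymbol{\lambda}}$ is Hermitian, its singular values are the absolute values of its eigenvalues, whence $\left(\mu_{\boldsymbol{\lambda}}^{C}\right)^{2}=\left(\sigma_{\min}(L_{\boldsymbol{\lambda}})\right)^{2}=\lambda_{\min}\!\left(L_{\boldsymbol{\lambda}}^{2}\right)$.

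The heart of the argument is the expansion of $L_{\boldsymbol{\lambda}}^{2}$. Writing $Y_{j}=X_{j}-\lambda_{j}$ (still Hermitian), one has $L_{\boldsymbol{\lambda}}^{2}=\sum_{j,k}Y_{j}Y_{k}\otimes\Gamma_{j}\Gamma_{k}$. The diagonal terms $j=k$ collapse via $\Gamma_{j}^{2}=I$ to $\left(\sum_{j}Y_{j}^{2}\right)\otimes I = Q_{\boldsymbol{\lambda}}\otimes I$. For the off-diagonal terms I would pair $(j,k)$ with $(k,j)$ and use the anticommutation $\Gamma_{k}\Gamma_{j}=-\Gamma_{j}\Gamma_{k}$ together with $[Y_{j},Y_{k}]=[X_{j},X_{k}]$ to obtain
\[
L_{\boldsymbol{\lambda}}^{2}=Q_{\boldsymbol{\lambda}}\otimes I + R,
\qquad
R:=\sum_{j<k}[X_{j},X_{k}]\otimes\Gamma_{j}\Gamma_{k}.
\]
I would note that $R$ is Hermitian: since $X_{j},X_{k}$ are Hermitian, $[X_{j},X_{k}]$ is anti-Hermitian, and likewise $(\Gamma_{j}\Gamma_{k})^{\dagger}=\Gamma_{k}\Gamma_{j}=-\Gamma_{j}\Gamma_{k}$ is anti-Hermitian, so each tensor factor reverses sign and the product is Hermitian. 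This is a consistency check, as $L_{\boldsymbol{\lambda}}^{2}$ and $Q_{\boldsymbol{\lambda}}\otimes I$ are both Hermitian.

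Next I would estimate $\|R\|$. Each $\Gamma_{j}$ is Hermitian with $\Gamma_{j}^{2}=I$, hence unitary, so $\Gamma_{j}\Gamma_{k}$ is unitary and $\|\Gamma_{j}\Gamma_{k}\|=1$; combining the triangle inequality with the multiplicativity $\|A\otimes B\|=\|A\|\,\|B\|$ gives $\|R\|\le\sum_{j<k}\|[X_{j},X_{k}]\|$. Finally, $Q_{\boldsymbol{\lambda}}\otimes I$ and $Q_{\boldsymbol{\lambda}}$ share the same eigenvalues, so $\lambda_{\min}(Q_{\boldsymbol{\lambda}}\otimes I)=\left(\mu_{\boldsymbol{\lambda}}^{Q}\right)^{2}$, and Weyl's inequality applied to the Hermitian pair $Q_{\boldsymbol{\lambda}}\otimes I$ and $R$ yields
\[
\left|\lambda_{\min}\!\left(L_{\boldsymbol{\lambda}}^{2}\right)-\lambda_{\min}\!\left(Q_{\boldsymbol{\lambda}}\otimes I\right)\right|\le\|R\|,
\]
which is precisely the claimed bound once the two sides are rewritten as $\left(\mu_{\boldsymbol{\lambda}}^{C}\right)^{2}$ and $\left(\mu_{\boldsymbol{\lambda}}^{Q}\right)^{2}$.

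The only genuinely non-routine step is the squaring identity, and within it the careful bookkeeping of the Clifford signs that makes exactly the commutators $[X_{j},X_{k}]$ survive while the anticommutator parts cancel; everything afterward — the operator-norm estimate and the appeal to Weyl's inequality — is entirely standard. A minor point worth stating explicitly is the translation $\left(\mu_{\boldsymbol{\lambda}}^{C}\right)^{2}=\lambda_{\min}(L_{\boldsymbol{\lambda}}^{2})$, which relies on $L_{\boldsymbol{\lambda}}$ being Hermitian so that squaring commutes with taking smallest-in-absolute-value eigenvalues.
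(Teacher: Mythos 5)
Your proposal is correct and follows essentially the same route as the paper: both hinge on the squaring identity $L_{\boldsymbol{\lambda}}^{2}=Q_{\boldsymbol{\lambda}}\otimes I+\sum_{j<k}[X_{j},X_{k}]\otimes\Gamma_{j}\Gamma_{k}$, bound the perturbation term in operator norm by $\sum_{j<k}\|[X_{j},X_{k}]\|$, and conclude via eigenvalue perturbation of Hermitian matrices (the paper phrases this as an operator-norm estimate between Hermitian matrices controlling the difference of smallest singular values, which is exactly your appeal to Weyl's inequality). The only difference is one of exposition: you derive the squaring identity and the Hermiticity of the remainder explicitly, whereas the paper cites the identity without proof.
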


\begin{proof}
Since tensoring by the identity will not alter the spectrum of a matrix,
\begin{equation*}
\sigma_{\min}\left(\sum(X_{j}-\lambda_{j})^{2}\right)=\sigma_{\min}\left(\sum(X_{j}-\lambda_{j})^{2}\otimes I\right).
\end{equation*}
We also have the estimate 
\begin{equation}
\left(L_{\boldsymbol{\lambda}}(X_{1},\cdots,X_{d})\right)^{2}=\sum(X_{j}-\lambda_{j})^{2}\otimes I+\sum_{j<k}\left[X_{j},X_{k}\right]\otimes\Gamma_{j}\Gamma_{k}
\label{difference_quad_Cliff}
\end{equation}
and so 
\[
\left\Vert \left(L_{\boldsymbol{\lambda}}(X_{1},\cdots,X_{d})\right)^{2}-\sum(X_{j}-\lambda_{j})^{2}\otimes I\right\Vert \leq\sum_{j<k}\left\Vert \left[X_{j},X_{k}\right]\right\Vert .
\]
As this is an operator norm estimate between two Hermitian matrices,
we find that 
\begin{equation*}
\left|\sigma_{\min}\left(L_{\boldsymbol{\lambda}}(X_{1},\cdots,X_{d})\right)^{2}-\sigma_{\min}\left(\sum(X_{j}-\lambda_{j})^{2}\otimes I\right)\right|\leq\sum_{j<k}\left\Vert \left[X_{j},X_{k}\right]\right\Vert .
\end{equation*}
\end{proof}

A difficultly in determining either pseudospectra of a noncommutative $d$-tuple is that calculating the value at one or more values of $\boldsymbol{\lambda}$ does not provide much assistance in calculating the value at another value.  We do, at least, have a sense of how fast
$\mu^C_{\boldsymbol{\lambda}}$ or $\mu^Q_{\boldsymbol{\lambda}}$ can vary, so can limit the number of vales of $\boldsymbol{\lambda}$ that need to be considered. We know from results in \cite[\S 7]{LoringPseudospectra} that $\mu^C_{\boldsymbol{\lambda}}$ is Lipschitz with constant $1$.  We next show that the same is true in the quadratic case.

\begin{prop}
Suppose $X_{1},\dots,X_{d}$ are Hermitian matrices in $\boldsymbol{M}_{n}(\mathbb{C})$.  If
$\boldsymbol{\lambda},\boldsymbol{\mu} $ are two elements of $\mathbb{R}^{d}$
\begin{equation*}
    \left| \mu_{\boldsymbol{\lambda}}^{Q}(X_{1},\dots,X_{d})-\mu_{\boldsymbol{\nu}}^{Q}(X_{1},\dots,X_{d}) \right|
    \leq
    \| \boldsymbol{\lambda} - \boldsymbol{\nu} \|
\end{equation*}
were the norm on the right is the Euclidean norm.
\end{prop}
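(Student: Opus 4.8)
The plan is to reduce the statement to the 1-Lipschitz behaviour of the smallest singular value, exploiting the reformulations already packaged in Proposition~\ref{prop:quad_gap_four_ways}. Combining the equality of items (1) and (3) there (and using that $Q_{\boldsymbol{\lambda}}$ is positive semidefinite, so its smallest singular value is its smallest eigenvalue) I would first record
\begin{equation*}
\mu_{\boldsymbol{\lambda}}^{Q}(X_{1},\dots,X_{d}) = \sigma_{\min}\!\left(M_{\boldsymbol{\lambda}}(X_{1},\dots,X_{d})\right) = \min_{\|\boldsymbol{v}\|=1}\sqrt{\sum_{j=1}^{d}\left\Vert X_{j}\boldsymbol{v}-\lambda_{j}\boldsymbol{v}\right\Vert^{2}}.
\end{equation*}
It then suffices to control how this quantity changes when $\boldsymbol{\lambda}$ is moved to $\boldsymbol{\nu}$.

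My preferred route works directly from the variational formula. For each fixed unit vector $\boldsymbol{v}$ I would set
\begin{equation*}
f_{\boldsymbol{v}}(\boldsymbol{\lambda}) = \sqrt{\sum_{j=1}^{d}\left\Vert X_{j}\boldsymbol{v}-\lambda_{j}\boldsymbol{v}\right\Vert^{2}},
\end{equation*}
and observe that this is exactly the distance, in the Hilbert space $(\mathbb{C}^{n})^{d}$, between the fixed vector $(X_{1}\boldsymbol{v},\dots,X_{d}\boldsymbol{v})$ and the vector $(\lambda_{1}\boldsymbol{v},\dots,\lambda_{d}\boldsymbol{v})$, the latter depending linearly on $\boldsymbol{\lambda}$. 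The reverse triangle inequality in that Hilbert space then yields
\begin{equation*}
\left|f_{\boldsymbol{v}}(\boldsymbol{\lambda})-f_{\boldsymbol{v}}(\boldsymbol{\nu})\right| \leq \sqrt{\sum_{j=1}^{d}\left\Vert(\lambda_{j}-\nu_{j})\boldsymbol{v}\right\Vert^{2}} = \|\boldsymbol{\lambda}-\boldsymbol{\nu}\|,
\end{equation*}
the last equality using $\|\boldsymbol{v}\|=1$. Thus every $f_{\boldsymbol{v}}$ is $1$-Lipschitz. Since $\mu_{\boldsymbol{\lambda}}^{Q}$ is the pointwise minimum of the family $\{f_{\boldsymbol{v}}\}$ over the (compact) unit sphere, and a pointwise infimum of $1$-Lipschitz functions is again $1$-Lipschitz wherever it is finite, the claimed bound follows.

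An equivalent route avoids the minimum-of-Lipschitz-functions lemma: invoke that $A\mapsto\sigma_{\min}(A)$ is $1$-Lipschitz in the operator norm (a consequence of Weyl's inequality for singular values) and estimate the perturbation directly. Here $M_{\boldsymbol{\lambda}}-M_{\boldsymbol{\nu}}$ is the block column whose $j$th block is $(\nu_{j}-\lambda_{j})I$, and the one-line computation
\begin{equation*}
\left\Vert(M_{\boldsymbol{\lambda}}-M_{\boldsymbol{\nu}})\boldsymbol{w}\right\Vert^{2} = \sum_{j=1}^{d}(\lambda_{j}-\nu_{j})^{2}\|\boldsymbol{w}\|^{2} = \|\boldsymbol{\lambda}-\boldsymbol{\nu}\|^{2}\,\|\boldsymbol{w}\|^{2}
\end{equation*}
shows $\|M_{\boldsymbol{\lambda}}-M_{\boldsymbol{\nu}}\| = \|\boldsymbol{\lambda}-\boldsymbol{\nu}\|$ exactly.

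This proof is short, so there is no serious obstacle; the only point needing care — and the only place where the constant could degrade — is the exact evaluation of the perturbation norm. The unit-vector normalization $\|\boldsymbol{v}\|=1$ (equivalently, the scalar-multiple-of-identity structure of the blocks) is precisely what forces the Lipschitz constant to be exactly $1$ rather than, say, $\sqrt{d}$. Everything else is bookkeeping, and I would present the variational route as the main argument since it is self-contained given Proposition~\ref{prop:quad_gap_four_ways}.
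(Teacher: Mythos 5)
Your proposal is correct, but your primary argument is genuinely different from the paper's. The paper proves the bound entirely at the matrix level: it invokes the fact that singular values are $1$-Lipschitz in the operator norm (justified via Weyl's inequality applied to the Hermitian dilation $\left[\begin{smallmatrix}0 & A\\ A^{\dagger} & 0\end{smallmatrix}\right]$) and then computes exactly, as you do in your second route, that
\begin{equation*}
\left\Vert M_{\boldsymbol{\lambda}}(X_{1},\dots,X_{d})-M_{\boldsymbol{\nu}}(X_{1},\dots,X_{d})\right\Vert = \|\boldsymbol{\lambda}-\boldsymbol{\nu}\|,
\end{equation*}
so your ``equivalent route'' is precisely the paper's proof. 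Your main argument instead stays with the variational characterization from Proposition~\ref{prop:quad_gap_four_ways}: each $f_{\boldsymbol{v}}(\boldsymbol{\lambda})$ is a distance in $(\mathbb{C}^{n})^{d}$ from a fixed vector to a point depending affinely on $\boldsymbol{\lambda}$, hence $1$-Lipschitz by the reverse triangle inequality, and a pointwise infimum of $1$-Lipschitz functions is $1$-Lipschitz. Both arguments are complete and short; what each buys is slightly different. Your variational route is more elementary and self-contained — it needs only the triangle inequality in a Hilbert space plus the standard infimum-of-Lipschitz-functions fact, with no appeal to eigenvalue perturbation theory. The paper's route, by contrast, requires Weyl's inequality but yields more: since the perturbation bound applies to every singular value, not just the smallest, it shows that \emph{all} singular values of $M_{\boldsymbol{\lambda}}$ vary $1$-Lipschitz-continuously in $\boldsymbol{\lambda}$, and the exact evaluation of the perturbation norm (rather than an upper bound) makes clear that the constant $1$ cannot be improved at the matrix level — a sharpness observation your main route does not directly give, though your closing remark about the normalization $\|\boldsymbol{v}\|=1$ gestures at the same point.
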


\begin{proof}
It is well known that all singular values are Lipshitz in the matrix input $A$, so long as one uses the operator norm to give the metric on the space of $m$-by-$n$ matrices.  For example, one can apply Weyl's inequality to the eigenvalues of
the Hermitian matrix 
\begin{equation*}
    \left[\begin{array}{cc}
0 & A\\
A^{\dagger} & 0
\end{array}\right].
\end{equation*}
Finally, 
\begin{align*}
\left\Vert M_{\boldsymbol{\lambda}}(X_{1},\cdots,X_{d})-M_{\boldsymbol{\mu}}(X_{1},\cdots,X_{d})\right\Vert ^{2} & =\left\Vert \left[\begin{array}{c}
(\lambda_{1}-\mu_{1})I\\
\vdots\\
(\lambda_{d}-\mu_{d})I
\end{array}\right]\right\Vert ^{2}\\
 & =\left\Vert \left[\begin{array}{ccc}
(\lambda_{1}-\mu_{1})I & \cdots & (\lambda_{d}-\mu_{d})I\end{array}\right]\left[\begin{array}{c}
(\lambda_{1}-\mu_{1})I\\
\vdots\\
(\lambda_{d}-\mu_{d})I
\end{array}\right]\right\Vert \\
 & =\left\Vert \left(\sum(\lambda_{1}-\mu_{1})^{2}\right)I\right\Vert \\
 & = \sum(\lambda_{1}-\mu_{1})^{2}.
\end{align*}
\end{proof}

\section{Where is the K-theory? \label{sec:Ktheory}}

For all its advantages, the quadratic pseudospectrum seems to not see $K$-theory.  To get a sense of why, we consider the straight path of
systems with $H_t$ as in 
(\ref{eqn:Basic_SSH_matrices})
except now with
\begin{equation*}
    v = 0.7(1-t) + 1.4t
\end{equation*}
and
\begin{equation*}
    w = 1.4(1-t) + 0.7t .
\end{equation*}

We fix $\boldsymbol{\lambda} = (4,0) $ so we are looking at zero energy and at the approximate middle in position.  (Staying way from the exact middle eliminates a mirror symmetry that confuses things.)
Instead of just looking at the smallest singular value of
$
    L_{(4,0)}(X,H_t)
$
and
$
    Q_{(4,0)}(X,H_t)
$
we look at the entire spectrum of this composite operators.  For easier comparison, we actually will plot the square root of the (positive) eigenvalues of the quadratic composite operator.

Since $[H_t,X]$ is not zero, and indeed has no null space, the additive form of the uncertainty principal \cite{maccone2014SumOfUncertainty}, combined with Proposition~\ref{prop:quad_gap_four_ways}, will tell us that
quadratic composite operator is never singular.  Thus we are looking at a path of invertible matrices and we cannot use it to detect a change of any topological index. See Figure~\ref{fig:SSH_path}-a.

The spectrum of the localizer, shown in Figure~\ref{fig:SSH_path}-b, is more promising.  However, the upward and downward moving eigenvalues appear as if they should somehow cancel out.  We explain below that there is a symmetry here in the localizer when $d=2$ which is forcing the entire spectrum of the localizer to be symmetric about zero.

Let 
\begin{equation*}
\Gamma=\left[\begin{array}{cccccccc}
1\\
 & -1\\
 &  & 1\\
 &  &  & -1\\
 &  &  &  & 1\\
 &  &  &  &  & -1\\
 &  &  &  &  &  & 1\\
 &  &  &  &  &  &  & -1
\end{array}\right]
\end{equation*}
be the grading operator for which we have $H_{t}\Gamma=-\Gamma H_{t}$ and $X\Gamma=\Gamma X$, reflecting the chiral nature of this system. 
Let $\tilde{X}=X-4I$. We find
\begin{equation*}
L_{(0,4)}(X,H_{t})=\left[\begin{array}{cc}
0 & \tilde{X}-iH_{t}\\
\tilde{X}+iH_{t} & 0
\end{array}\right]=\left[\begin{array}{cc}
\Gamma\\
 & I
\end{array}\right]\left[\begin{array}{cc}
0 & \left(\left(\tilde{X}+iH_{t}\right)\Gamma\right)^{*}\\
\left(\tilde{X}+iH_{t}\right)\Gamma & 0
\end{array}\right]\left[\begin{array}{cc}
\Gamma\\
 & I
\end{array}\right]^{*}
\end{equation*}
which tells us that
\begin{equation*}
\alpha\in\sigma\left(L_{(0,4)}(X,H_{t})\right)\iff\pm\alpha\in\sigma\left(\left(\tilde{X}+iH_{t}\right)\Gamma\right).
\end{equation*}
Notice we are using $\sigma_x$ and $\sigma_y$ as the default choice of $\Gamma $ matrices, for reasons we explain in Sec.\ \ref{sec:examples}.

\begin{figure}[t]
    \centering
   \includegraphics{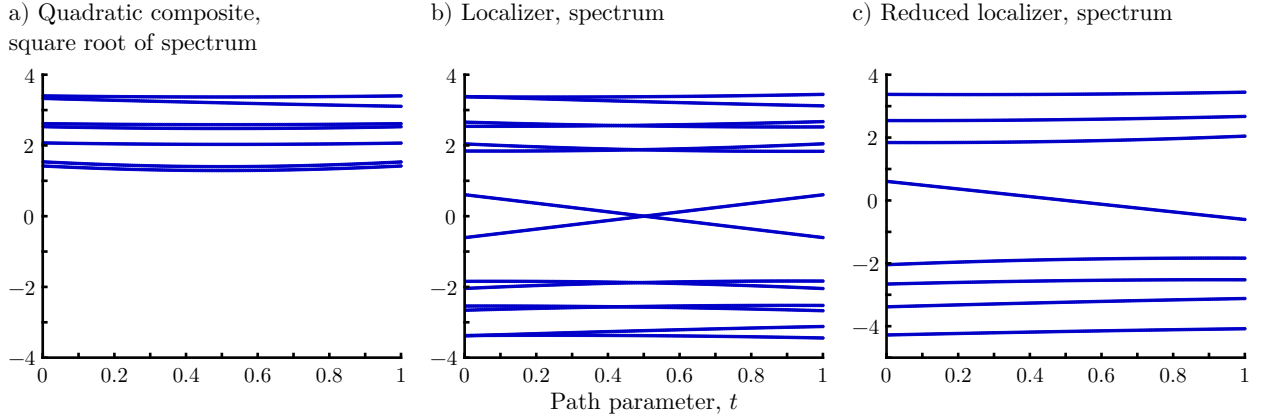}
    \caption{We work along a path from a trivial system at $t=0$ to a topological system at $t=1$.  In (a) we look at the full spectrum of the quadratic composite operator,  fixing $\boldsymbol{\lambda} = (4,0)$.  In (b) we show the spectrum of the localizer.  In (c) we show the spectrum of the reduced localizer, $((X- 4)+iH_{t})\Gamma$.
    \label{fig:SSH_path}}
\end{figure} 

We plot the spectrum of this ``reduced localizer'' in Figure~\ref{fig:SSH_path}-c.  The details of
how this spectral imbalance is a proper $K$-theory invariant can be found in \cite{LoringSchuBa_odd}.  We mention it here to explain that the positivity of the quadratic composite operator makes it immune to $K$-theory.  We say positive as a shorter version of positive semi-definite.

In other symmetry classes, the correct index is found using the sign of the determinant of a matrix that has real elements.  Notice here that $(\tilde{X}+iH_{t})\Gamma$ has real determinant and an odd number of eigenvalue passing from positive to negative would have been reflected in the sign of the determinant.  We will look at another small example in the next section that corresponds to a symmetry class where the expected index is $\mathbb{Z}/2$  index that can be found using the sign of a determinant of a similar matrix.  The details of why this index is a reasonable index are more complicated \cite{DollShuba_skew-localizer}.  Again, the fact that the quadratic composite operator is positive means its determinant is stuck being positive, so cannot detect this sort of invariant either.

\section{Mathematical Examples \label{sec:examples}}

In mathematics, the traditional form of the pseudospectrum arose as a way to
investigate a single matrix $A$ that is not normal \cite{TrefethenEmbree}.
We can write $A$ in the usual way, in terms of two Hermitian matrices
\begin{equation*}
    A = X + iY
\end{equation*}
where $X = \tfrac{1}{2}( A^* + A )$  and $B = \tfrac{i}{2} (A - A^*)$.  The quadratic pseudospectrum 
\begin{equation*}
 \mu_{(x,y)}^{Q}(X,Y) = 
 \min_{\|\boldsymbol{v}\|=1}\sqrt{\left\Vert X\boldsymbol{v}-x\boldsymbol{v}\right\Vert ^{2} + \left\Vert Y\boldsymbol{v}-y\boldsymbol{v}\right\Vert ^{2}}
 \end{equation*}
looks for good approximate eigenvectors for $X$ and $Y$ at the same time, while
the Clifford pseudospectrum
\begin{equation*}
 \mu_{(x,y)}^{C}(X,Y) = 
 \min_{\|\boldsymbol{v}\|=1}\left\Vert A\boldsymbol{v}-(x+iy)\boldsymbol{v}\right\Vert
 \end{equation*}
looks for exact and approximate eigenvectors of $A$ which might correspond to 
complex eigenvalues.  To see that later claim, notice that if we select $\sigma_x$ and $\sigma_y$ as our $\Gamma$ matrices then 
\begin{align*}
    L_{(x,y)}(X,Y) 
&= \left[\begin{array}{cc}
0 & X-iY-(x-iy)\\
X+iY-(x+iy) & 0
\end{array}\right].
\end{align*}
Here we use $\sigma_x$, $\sigma_y$ and $\sigma_z$ to denote the usual 
2-by-2 Pauli matrices.

The traditional pseudospectrum of a non-normal matrix has been applied in physics to assist with the analysis of 
lossy systems \cite{krejvcivrik2015pseudospectra,Makris2021transient_Expept_Points,Okuma2020Zero_Modes_nonnormality,Sivan2022multipleSSH,komis_robustness_2022}.

\subsection{A 2-by-2 pair}
Here we use the Pauli matrices themselves as our example, so we let
\begin{equation*}
    X=\left[\begin{array}{cc}
0 & 1\\
1 & 0
\end{array}\right]
\end{equation*}
and
\begin{equation*}
    Y=\left[\begin{array}{cc}
0 & -i\\
i & 0
\end{array}\right].
\end{equation*}
In this example, 
\begin{equation*}
    A = X+iY = \left[\begin{array}{cc}
0 & 2\\
0 & 0
\end{array}\right]
\end{equation*}
is nilpotent.  Thus the spectrum of $A$ is just $\{0\}\subseteq\mathbb{C}$ and so the Clifford pseudospectrum of $(X,Y)$ has one zero, at $(0,0)\in \mathbb{R}^{2}$.
This is one of the few cases for which we can easily calculate by hand both the Clifford and quadratic pseudospectra.

We need the singular values of 
\begin{equation*}
    A-\lambda=\left[\begin{array}{cc}
-\lambda & 2\\
0 & -\lambda
\end{array}\right]
\end{equation*}
(with $\overline{\lambda}=x+iy)$ so the square roots of the eigenvalues of 
\begin{equation*}
  \left[\begin{array}{cc}
|\lambda|^{2}+4 & -2\overline{\lambda}\\
-2\lambda & |\lambda|^{2}
\end{array}\right].
\end{equation*}
This has characteristic polynomial (in $\alpha$)
\begin{equation*}
\left|\begin{array}{cc}
|\lambda|^{2}+4-\alpha & -2\overline{\lambda}\\
-2\lambda & |\lambda|^{2}-\alpha
\end{array}\right|=\alpha^{2}+\left(-2|\lambda|^{2}-4\right)\alpha+|\lambda|^{4}
\end{equation*}
so the eigenvalues are
\begin{equation*}
\frac{-\left(-2|\lambda|^{2}-4\right)\pm\sqrt{\left(-2|\lambda|^{2}-4\right)^{2}-4\left(|\lambda|^{4}\right)}}{2}=|\lambda|^{2}+2\pm2\sqrt{|\lambda|^{2}+1} .
\end{equation*}
We want the square root of the smaller, so we have computed
\begin{equation*}
   \mu_{(x,y)}^{C}(X,Y)=\sqrt{x^{2}+y^{2}+2-2\sqrt{x^{2}+y^{2}+1}}.
\end{equation*}

As to the quadratic pseudospectrum of (X,Y), we need the square root of the smallest eigenvalue of
\begin{equation*}
(X-x)^{2}+(Y-y)^{2}=\left[\begin{array}{cc}
|\lambda|^{2}+2 & -2\overline{\lambda}\\
-2\lambda & |\lambda|^{2}+2
\end{array}\right] .
\end{equation*}
This has characteristic polynomial
\begin{equation*}
\left|\begin{array}{cc}
|\lambda|^{2}+2-\alpha & -2\overline{\lambda}\\
-2\lambda & |\lambda|^{2}+2-\alpha
\end{array}\right|=\alpha^{2}+\left(-2|\lambda|^{2}-4\right)\alpha+|\lambda|^{4}+4|\lambda|^{2}+4
\end{equation*}
so the eigenvalues are
\begin{equation*}
\frac{-\left(-2|\lambda|^{2}-4\right)\pm\sqrt{\left(-2|\lambda|^{2}-4\right)^{2}-4\left(|\lambda|^{4}+4|\lambda|^{2}+4\right)}}{2}=|\lambda|^{2}+2\pm2|\lambda|.
\end{equation*}
We want the square root of the smaller, so we have computed
\begin{equation*}
\mu_{(x,y)}^{Q}(X,Y)=\sqrt{x^{2}+y^{2}+2-2\sqrt{x^{2}+y^{2}}}.
\end{equation*}

In this example, the Clifford pseudospectrum has minimum value of $0$, attained at the one point $(x,y)=(0,0)$.  In contrast, the quadratic pseudospectrum has minimum value of $1$ attained on the unit circle.  Notice that at $(0,0)$ the difference between the squares of the two pseudospectra is $2$, equaling the norm of the commutator of $A$ and $B$.  Thus the estimate in Proposition~\ref{prop:Diff_PS_bound} is the best possible, at least for the case of a pair of matrices.

It is interesting to note that the Clifford spectrum of
the three Pauli spin matrices is the unit sphere, while the
Clifford spectrum of any two of them is a singleton.  This is
just one of the nonintuitive features of the Clifford spectrum 
\cite{DebLorSverSurfaces}.

\subsection{A 3-by-3 pair \label{subsec:3x3pair}}

A slightly larger example is the pair
\begin{equation*}
X =\left[\begin{array}{ccc}
0 & 1 & 0\\
1 & 0 & 1\\
0 & 1 & 0
\end{array}\right] 
\end{equation*}
and
\begin{equation*}
Y =\left[\begin{array}{ccc}
0 & 1 & 0\\
1 & 0 & 0\\
0 & 0 & 0
\end{array}\right] 
\end{equation*}
corresponding to the one non-normal matrix
\begin{equation*}
A = X+iY = \left[\begin{array}{ccc}
0 & 1+i & 0\\
1+i & 0 & 1\\
0 & 1 & 0
\end{array}\right] 
\end{equation*}

\begin{figure}[t]
    \centering
    \includegraphics{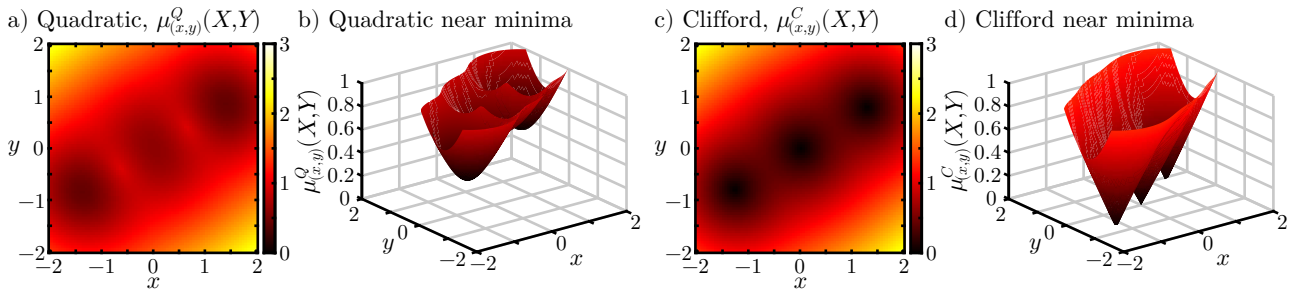}
    \caption{Both the quadratic and Clifford pseudospectra are shown for the 3-by-3 pair of matrices discussed in Sec.\ \ref{subsec:3x3pair}.
    \label{fig:NonHermitianMatrixPS}}
\end{figure}

This has eigenvalues at $0$ and approximately $\pm(1.272+0.786i)$. Figure~\ref{fig:NonHermitianMatrixPS} shows the two pseudospectra for this pair of matrices.

\subsection{A 4-by-4 pair \label{subsec:4x4pair}}

For a second example we let $A=X+iY$ with 
\begin{equation*}
X=\left[\begin{array}{cccc}
-2 & 0 & 0 & 0\\
0 & 2 & 0 & 0\\
0 & 0 & 0 & 0\\
0 & 0 & 0 & 1
\end{array}\right]
\end{equation*}
and
\begin{equation*}
 Y=\left[\begin{array}{cccc}
0 & i & 0 & 0\\
-i & 0 & i & 0\\
0 & -i & 0 & i\\
0 & 0 & -i & 0
\end{array}\right]
\end{equation*}
Since one of our matrices is real and the other purely imaginary, this is reminiscent of
a model of a 1D system in class D \cite{altland_nonstandard_1997,ryu_topological_2010}.  Thus $X+iY$ is real, and the $\mathbb{Z}/2$ topological index related to this example is the sign of the determinant of $X+iY$ \cite{LoringPseudospectra}.  In this case, the determinant is negative, which means we can expect one or three eigenvalues on the negative part of the real axis.  Figure~\ref{fig:ClassDMatrixPS} shows the two pseudospectra for this pair of matrices. The real eigenvalues are approximately $-1.7638$ and $1.4400$ while the complex pair is approximately $0.6619 \pm 1.2371i$.
 
\begin{figure}[t]
    \centering
    \includegraphics{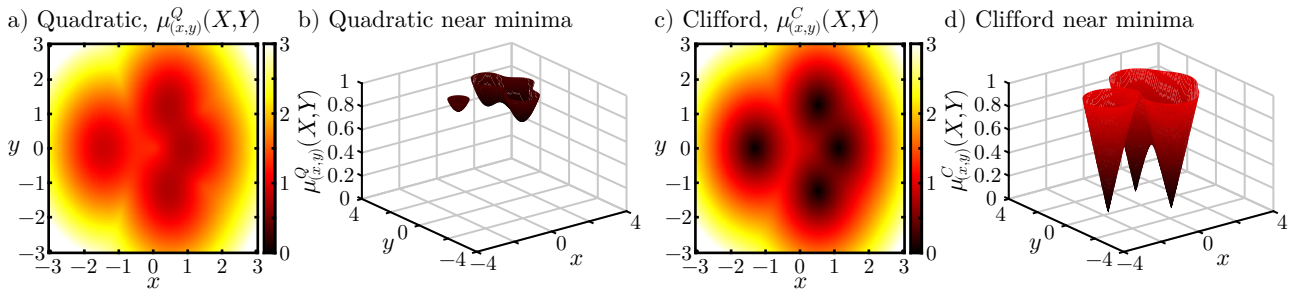}
    \caption{Both the quadratic and Clifford pseudospectra are shown for the 4-by-4 pair of matrices discussed in Sec.\ \ref{subsec:4x4pair} where one of the matrices is real and the other is purely imaginary. 
    \label{fig:ClassDMatrixPS}}
\end{figure}

The quadratic pseudospectrum gives a methodology to attack the problem of finding a unit vector
$\boldsymbol{v}$ in $\mathbb{R}^n$ to minimize
\begin{equation*}
\left\Vert X\boldsymbol{v}-x\boldsymbol{v}\right\Vert ^{2} + \left\Vert Y\boldsymbol{v}-y\boldsymbol{v}\right\Vert ^{2}
 \end{equation*}
given a fixed pair $(x,y)$ of scalars.  Many related questions come to mind,
such as optimizing by varying both $\boldsymbol{v}$ and  $(x,y)$.  One can
also seek joint approximate eigenvectors $\boldsymbol{v}_1,\dots,\boldsymbol{v}_k$ 
for $1< k \leq n$ and perhaps require these vectors to be orthogonal.  The case when $k=n$ is the optimization problem addressed by the JADE algorithm \cite{cardosoSimultanDiagn}.  Moreover, when $k \ll n$ it is possible to combine quadratic pseudospectrum method with JADE to get a small number of orthogonal joint approximate eigenvectors \cite{LoringVides2022PJADE}.  While this may have applications in
physics \cite{RevModPhys.84.1419} we are content to have this paper focus on the problems related to a single 
joint approximate eigenvector.
 
\subsection{AB phase change in Class D}

A slight variation on the last example is more relevant to topological 
insulators.  In a tight-binding model of a 1D system in symmetry class D we
expect a position operator $X$ that is real symmetric (even diagonal) and
a Hamiltonian $H$ that is purely imaginary.  

In this example we work with 
\begin{equation*}
    X=\left[\begin{array}{ccccccc}
\!\!\!-3.5\\
 & \!\!\!\!-2.33\\
 &  & \!\!\!\!-1.17\\
 &  &  & 0\\
 &  &  &  & 1.17\\
 &  &  &  &  & 2.33\\
 &  &  &  &  &  & 3.5
\end{array}\right]
\end{equation*}
 and
\begin{equation*}
    H=\left[\begin{array}{ccccccc}
0 & 1.4i\\
\!-1.4i & 0 & 0.7i\\
 & \!-0.7i & 0 & 0.7i\\
 &  & \!-0.7i & 0 & 1.4i\\
 &  &  & \!-1.4i & 0 & 0.7i\\
 &  &  &  & \!-0.7i & 0 & 1.4i\\
 &  &  &  &  & \!-1.4i & 0
\end{array}\right]
\end{equation*}
This is reminiscent of an AB phase change in the SSH model.  Notice this system has an odd number of sites, in line with previous theory
\cite{su1979solitons} and experiment \cite{meier_SSH_observation,weimann_topologically_2017}.  

The real matrix $X+iH$ has a negative determinant, so it is not
surprising that it has a single real eigenvalue and six conjugate pairs of complex eigenvalues.  What is remarkable is that the negative eigenvalue
is at the location of the AB phase transition, and that it is not easily moved by perturbation.

\begin{figure}
    \centering
    \includegraphics{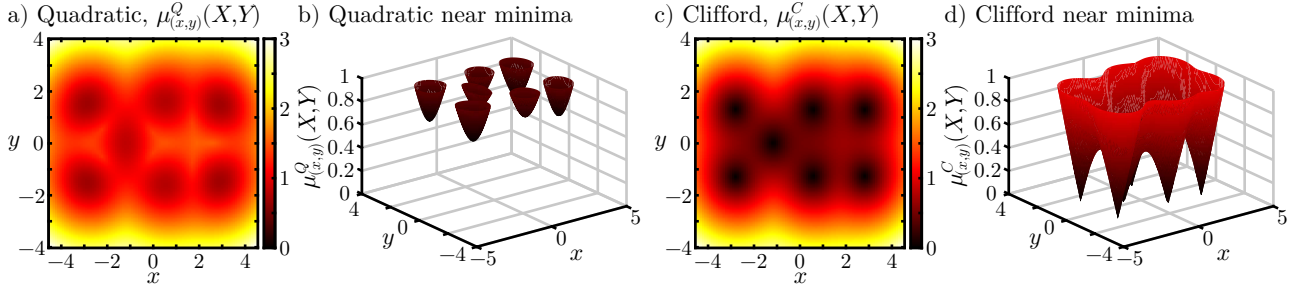}
    \caption{Both the Clifford and quadratic pseudospectra are shown for two small Hermitian that represent a small system in class D with a defect induced by a phase change.
    \label{fig:ClassDMatrix_SSH_likePS}}
\end{figure} 

For the record:  The real eigenvalue is approximately $-1.1603$ while the complex pairs are approximately $-2.7876 \pm 1.2941i$ and $2.8020 \pm 1.2703i$.
 
\subsection{Symmetries in the pseudospectra}
 
 Many of the previous examples had apparent symmetries in both the Clifford and Quadratic pseudospectra, as symmetries in the given matrices tend to lead to symmetries in the pseudospectra. 
 
 It is hard not to see horizontal and vertical symmetry in all the images in Figures~\ref{fig:Basic_SSH_PS} and \ref{fig:Basic_SSH_Quad_PS}. These symmetries are a manifestation of the two symmetries in the underlying SSH systems, mirror symmetry and sublattice (chiral) symmetry. That this leads to symmetry with both forms of pseudospectrum is then a consequence of the following theorem.

\begin{thm}
Suppose $X_{1},\dots,X_{d}$ are Hermitian matrices. Suppose $S$
is a unitary matrix such that $SX_{j}=X_{j}S$ for all $j$ except
that $SX_{d}=-X_{d}S$. Then
\begin{equation*}
\mu_{\boldsymbol{\lambda}}^{C}(X_{1},\dots,X_{d})=\mu_{\boldsymbol{\gamma}}^{C}(X_{1},\dots,X_{d})
\end{equation*}
and 
\begin{equation*}
\mu_{\boldsymbol{\lambda}}^{Q}(X_{1},\dots,X_{d})=\mu_{\boldsymbol{\gamma}}^{Q}(X_{1},\dots,X_{d})
\end{equation*}
 for $\boldsymbol{\gamma}=(\lambda_{1},\lambda_{2},\dots,\lambda_{d-1},-\lambda_{d})$.
\end{thm}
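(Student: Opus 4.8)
The plan is to realize both gaps as unitarily invariant spectral data of the relevant composite operator and then to exhibit an explicit unitary conjugating the operator at $\boldsymbol{\lambda}$ onto the operator at $\boldsymbol{\gamma}$ (up to a harmless overall sign in the Clifford case). First I would unpack the hypotheses: since $S$ is unitary, $SX_jS^{*}=X_j$ for $j\neq d$ while $SX_dS^{*}=-X_d$, so that $S(X_j-\lambda_j)S^{*}=X_j-\lambda_j$ for $j<d$ and $S(X_d-\lambda_d)S^{*}=-(X_d+\lambda_d)=-(X_d-(-\lambda_d))$. The appearance of $-\lambda_d$ here is exactly the coordinate flip recorded in $\boldsymbol{\gamma}$.

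For the quadratic gap I would conjugate $Q_{\boldsymbol{\lambda}}(X_1,\dots,X_d)=\sum_j (X_j-\lambda_j)^2$ by $S$. Because each summand is squared, the sign produced by $SX_dS^{*}=-X_d$ is absorbed: $S(X_d-\lambda_d)^2 S^{*}=(X_d+\lambda_d)^2=(X_d-(-\lambda_d))^2$, while the terms with $j<d$ are fixed. Hence $S\,Q_{\boldsymbol{\lambda}}\,S^{*}=Q_{\boldsymbol{\gamma}}$ exactly. Unitary conjugation preserves the spectrum, so $Q_{\boldsymbol{\lambda}}$ and $Q_{\boldsymbol{\gamma}}$ share a smallest eigenvalue, and taking square roots gives $\mu^{Q}_{\boldsymbol{\lambda}}=\mu^{Q}_{\boldsymbol{\gamma}}$ by the definition of the quadratic gap.

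The Clifford case is where the main obstacle lies. Conjugating $L_{\boldsymbol{\lambda}}=\sum_j (X_j-\lambda_j)\otimes\Gamma_j$ by $S\otimes I$ corrects all the $X$-factors but leaves the last term as $-(X_d+\lambda_d)\otimes\Gamma_d$; that is, it produces $L_{\boldsymbol{\gamma}}$ written in the \emph{altered} Clifford system $\{\Gamma_1,\dots,\Gamma_{d-1},-\Gamma_d\}$ rather than $L_{\boldsymbol{\gamma}}$ itself, and naively one cannot remove this sign by a conjugation inside the representation for all parities of $d$. The fix is to conjugate instead by $S\otimes\Gamma_d$, which is unitary since $\Gamma_d$ is Hermitian with $\Gamma_d^2=I$. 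Using $\Gamma_d\Gamma_j\Gamma_d=-\Gamma_j$ for $j\neq d$ and $\Gamma_d\Gamma_d\Gamma_d=\Gamma_d$, a direct computation gives
\begin{equation*}
(S\otimes\Gamma_d)\,L_{\boldsymbol{\lambda}}(X_1,\dots,X_d)\,(S\otimes\Gamma_d)^{*}=-L_{\boldsymbol{\gamma}}(X_1,\dots,X_d).
\end{equation*}
Since an overall sign does not change singular values and unitary conjugation preserves them, $\sigma_{\min}(L_{\boldsymbol{\lambda}})=\sigma_{\min}(-L_{\boldsymbol{\gamma}})=\sigma_{\min}(L_{\boldsymbol{\gamma}})$, i.e.\ $\mu^{C}_{\boldsymbol{\lambda}}=\mu^{C}_{\boldsymbol{\gamma}}$. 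The only step needing care is verifying the two $\Gamma_d$-conjugation identities and tracking the single residual sign so that the result lands exactly on $-L_{\boldsymbol{\gamma}}$; everything else reduces to the invariance of spectra and singular values under unitary conjugation.
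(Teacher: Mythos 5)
Your proposal is correct, and while your quadratic half coincides with the paper's argument (conjugate $Q_{\boldsymbol{\lambda}}$ by $S$ and let the squares absorb the sign), your Clifford half takes a genuinely different and in fact cleaner route. The paper also conjugates the localizer by a unitary of the form $S\otimes R$, but it picks $R$ by a parity case split: for $d$ even it invokes the existence of a unitary $R$ with $R\Gamma_{d}R^{\dagger}=-\Gamma_{d}$ and $R\Gamma_{j}R^{\dagger}=\Gamma_{j}$ for $j<d$, landing exactly on $L_{\boldsymbol{\gamma}}$; for $d$ odd it invokes a unitary $R$ with $R\Gamma_{j}R^{\dagger}=-\Gamma_{j}$ for \emph{all} $j$, landing on $-L_{\boldsymbol{\gamma}}$ and then discarding the sign exactly as you do. Your choice $R=\Gamma_{d}$ does the job uniformly in $d$: the identity $(S\otimes\Gamma_{d})L_{\boldsymbol{\lambda}}(S\otimes\Gamma_{d})^{\dagger}=-L_{\boldsymbol{\gamma}}$ checks out, since $\Gamma_{d}\Gamma_{j}\Gamma_{d}=-\Gamma_{j}$ flips every term except the last, while conjugation by $S$ flips only the last, and the residual overall sign is harmless because the localizer is Hermitian, so its singular values are absolute values of eigenvalues. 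This buys more than brevity and the elimination of the case analysis: the paper's odd-$d$ step is actually delicate, because for an irreducible Clifford representation (e.g., the Pauli matrices for $d=3$, which is the paper's own default choice in its Chern-insulator example) no unitary $R$ can satisfy $R\Gamma_{j}R^{\dagger}=-\Gamma_{j}$ for all $j$: the product $\Gamma_{1}\Gamma_{2}\cdots\Gamma_{d}$ is central for odd $d$, hence a scalar multiple of the identity in an irreducible representation, and it is fixed by any conjugation yet negated by the map $\Gamma_{j}\mapsto-\Gamma_{j}$ when $d$ is odd. (The paper's claim is salvageable for reducible representations containing both irreducible summands, but that hypothesis is never stated.) So your explicit unitary $S\otimes\Gamma_{d}$ not only proves the theorem; it sidesteps a genuine gap in the published argument.
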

 
 \begin{proof}
Since
\begin{equation*}
S(X_{d}-\lambda_{d})^{2}S^{\dagger}=(-X_{d}-\lambda_{d})^{2}=(X_{d}+\lambda_{d})^{2}
\end{equation*}
and 
\begin{equation*}
S(X_{j}-\lambda_{j})^{2}S^{\dagger}=(X_{j}-\lambda_{j})^{2}
\end{equation*}
for the other $j$ we find that 
\begin{equation*}
SQ_{\boldsymbol{\lambda}}(X_{1},\dots,X_{d})S^{\dagger}=Q_{\boldsymbol{\gamma}}(X_{1},\dots,X_{d}).
\end{equation*}
 These two composite matrices are unitarily equivalent, so they have
the same eigenvalues.

For the localizer, we first fix some choice of the $\Gamma_{j}$.  Notice that
$\Gamma_{1},\Gamma_{2},\dots,\Gamma_{d-1},\Gamma_{d}$ is also
a representation of the generators the appropriate Clifford algebra.  When $d$ is even
there is a unitary matrix $R$ so that $R\Gamma_{d}R^{\dagger}=-\Gamma_{d}$
and $R\Gamma_{j}R^{\dagger}=\Gamma_{j}$ for all other $j$. We now find
\begin{equation*}
\left(S\otimes R\right)\left((X_{d}-\lambda_{d})\otimes\Gamma_{d}\right)\left(S\otimes R\right)^{\dagger}=(-X_{d}-\lambda_{d})\otimes\left(-\Gamma_{d}\right)=(X_{d}+\lambda_{d})\otimes\Gamma_{d}
\end{equation*}
 and, for the other $j$,
\begin{equation*}
\left(S\otimes R\right)\left((X_{j}-\lambda_{j})\otimes\Gamma_{j}\right)\left(S\otimes R\right)^{\dagger}=(X_{j}-\lambda_{j})\otimes\Gamma_{j}.
\end{equation*}
Therefore 
\begin{equation*}
\left(S\otimes R\right)L_{\boldsymbol{\lambda}}(X_{1},\dots,X_{d})\left(S\otimes R\right)^{\dagger}=L_{\boldsymbol{\gamma}}(X_{1},\dots,X_{d})
\end{equation*}
and this tells us that the spectra of the two localizers are equal.

When $d$ is odd,  $\Gamma_{1},\Gamma_{2},\dots,\Gamma_{d-1},\Gamma_{d}$ is no longer equivelent to the original choice of matrices, but we can find a unitary  matrix $R$ so that $R\Gamma_{j}R^{\dagger}=-\Gamma_{j}$ for $j=1,\dots,d$.  In this case we find 
\begin{equation*}
\left(S\otimes R\right)L_{\boldsymbol{\lambda}}(X_{1},\dots,X_{d})\left(S\otimes R\right)^{\dagger}=-L_{\boldsymbol{\gamma}}(X_{1},\dots,X_{d})
\end{equation*}
which is good enough, as we are after the absolute values of the eigenvalues.
\end{proof}

\section{Edge and bulk states \label{sec:bulkEdge}}

Next we look at examples of the quadratic spectrum for matrices that represent observables in a physically interesting system. Although our analysis begins with a general two-dimensional lattice, we will later specialize to a specific model that consists of half of the low-energy tight-binding model for HgTe \cite{qi_topological_2006,konig_quantum_2007,konig_quantum_2008}, where the two spin sectors are decoupled.

In a general 2D lattice, there are three observables, the Hamiltonian $H$ and matrices $X$ and $Y$ representing position.  We make the specific choice for the $\Gamma$ matrices for the Clifford representation to be the Pauli spin matrices.  Now the localizer takes the form
\begin{align*}
    L_{(x,y,E)}(X,Y,H) 
    &= (X-x)\otimes \sigma_x + (Y-y)\otimes \sigma_y + (H-E)\otimes \sigma_z \\
    &= \left[\begin{array}{cc}
H-E & X-iY-(x-iy)\\
X+iY-(x+iy) & -H+E
\end{array}\right].
\end{align*}
We also can assume that $X$ and $Y$ commute, so (\ref{difference_quad_Cliff})  becomes
\begin{equation}
\left(L_{(x,y,E)}(X,Y,H)\right)^{2}
=Q_{(x,y,E)}(X,Y,H)\otimes I+
\left[\begin{array}{cc}
0 & [H,X+iY]\\
\left([H,X+iY]\right)^{\dagger} & 0
\end{array}\right]
\end{equation}
and (\ref{PS_to_PS}) becomes
\begin{equation}
\left|\left(\mu_{\boldsymbol{\lambda}}^{Q}(X,Y,H)\right)^2-\left(\mu_{\boldsymbol{\lambda}}^{C}(X,Y,H)\right)^2\right|
\leq
\left\Vert \left[H,X+iY \right]\right\Vert .
\label{difference_quad_Cliff_2D}
\end{equation}

The units for $H$ and for $X$ and $Y$ are not necessarily compatible, so we must introduce a constant $\kappa$ that represents changing units for measuring position.  Mathematically
this just means we compute joint pseudospectra of $(\kappa X, \kappa Y, H)$.  If $\kappa$ is too close to zero the pseudospectra will only really see the system's energy spectrum.   If $\kappa$ is too large the pseudospectra will only really see position information.

\begin{figure}
    \centering
    \includegraphics{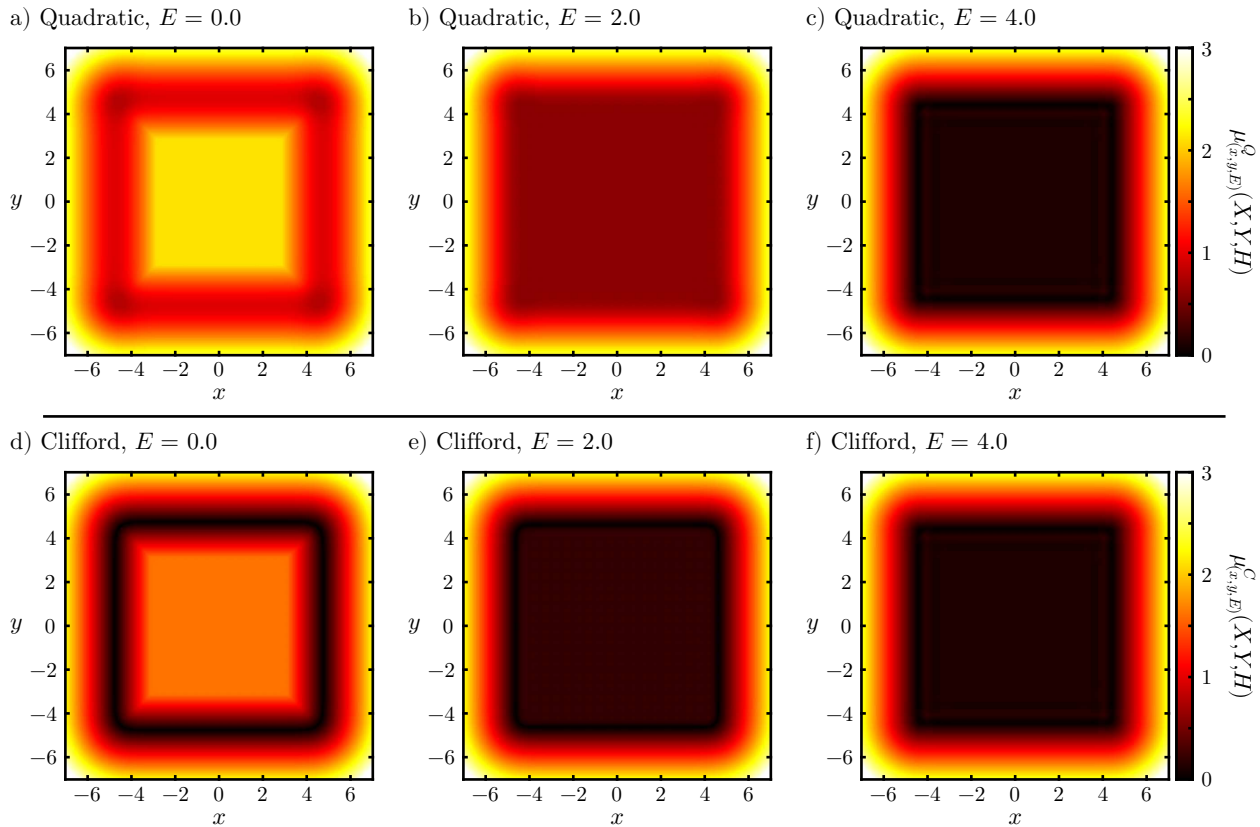}
    \caption{Here we look at slices of both flavors of pseudospectra for a very standard Chern insulator model.  This is the same model as used in \cite{LoringPseudospectra,LorHastHgTe} and is here on 20-by-20 lattice.  The lattice unit here is set to $0.5$ so the model has $x$ and $y$ coordinates in the range $-5$ to $5$.  The spectral gap in the bulk Hamiltonian is roughly between $-1$ and $1$.
    \label{fig:Chern_both_PS}}
\end{figure}

\begin{figure}
    \centering
    \includegraphics{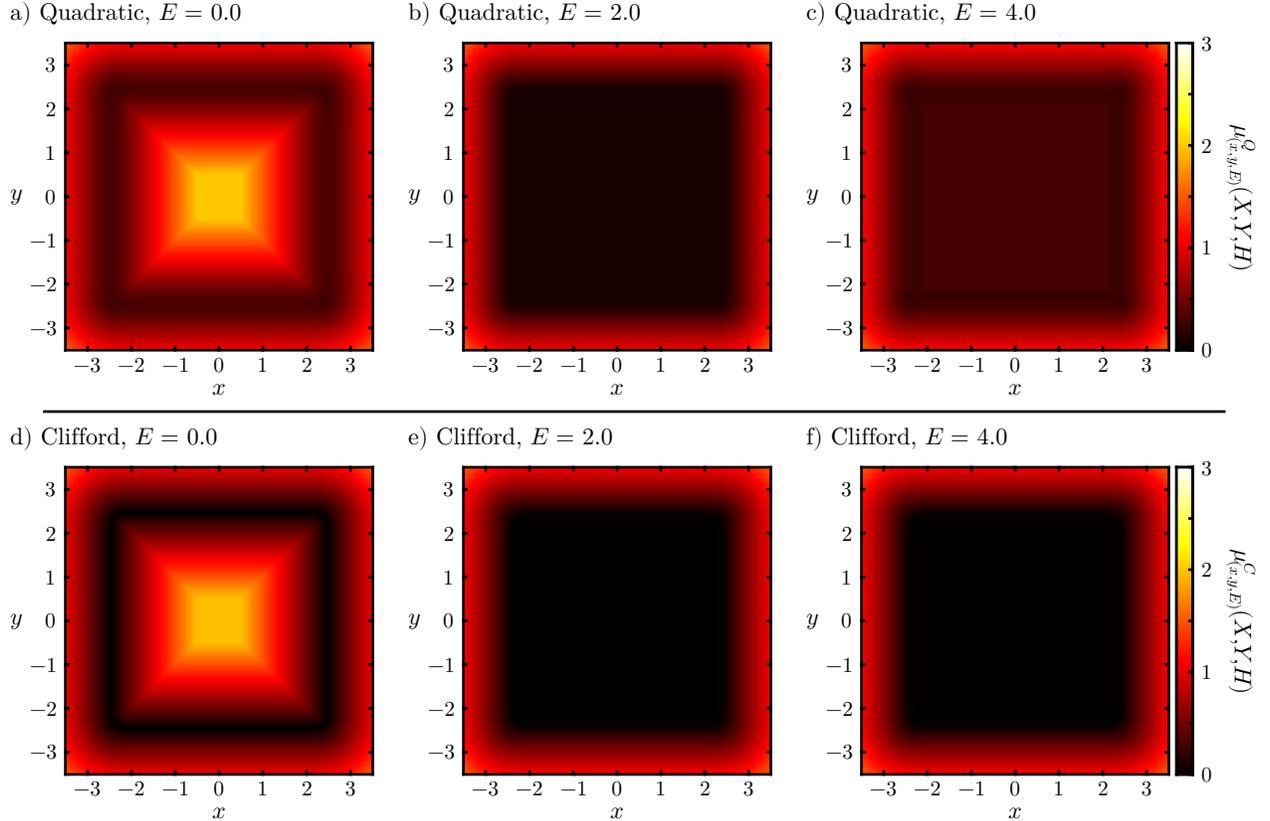}
    \caption{Here we look at slices of both flavors of pseudospectra for a very standard Chern insulator model.  This is the same model as used in \cite{LoringPseudospectra,LorHastHgTe} and is here on 100-by-100 lattice.  The lattice unit here is set to $0.05$ so the model has $x$ and $y$ coordinates in the range $-2.5$ to $2.5$.
    \label{fig:ChernBig_both_PS}}
\end{figure}

\begin{figure}
 \centering
 \includegraphics{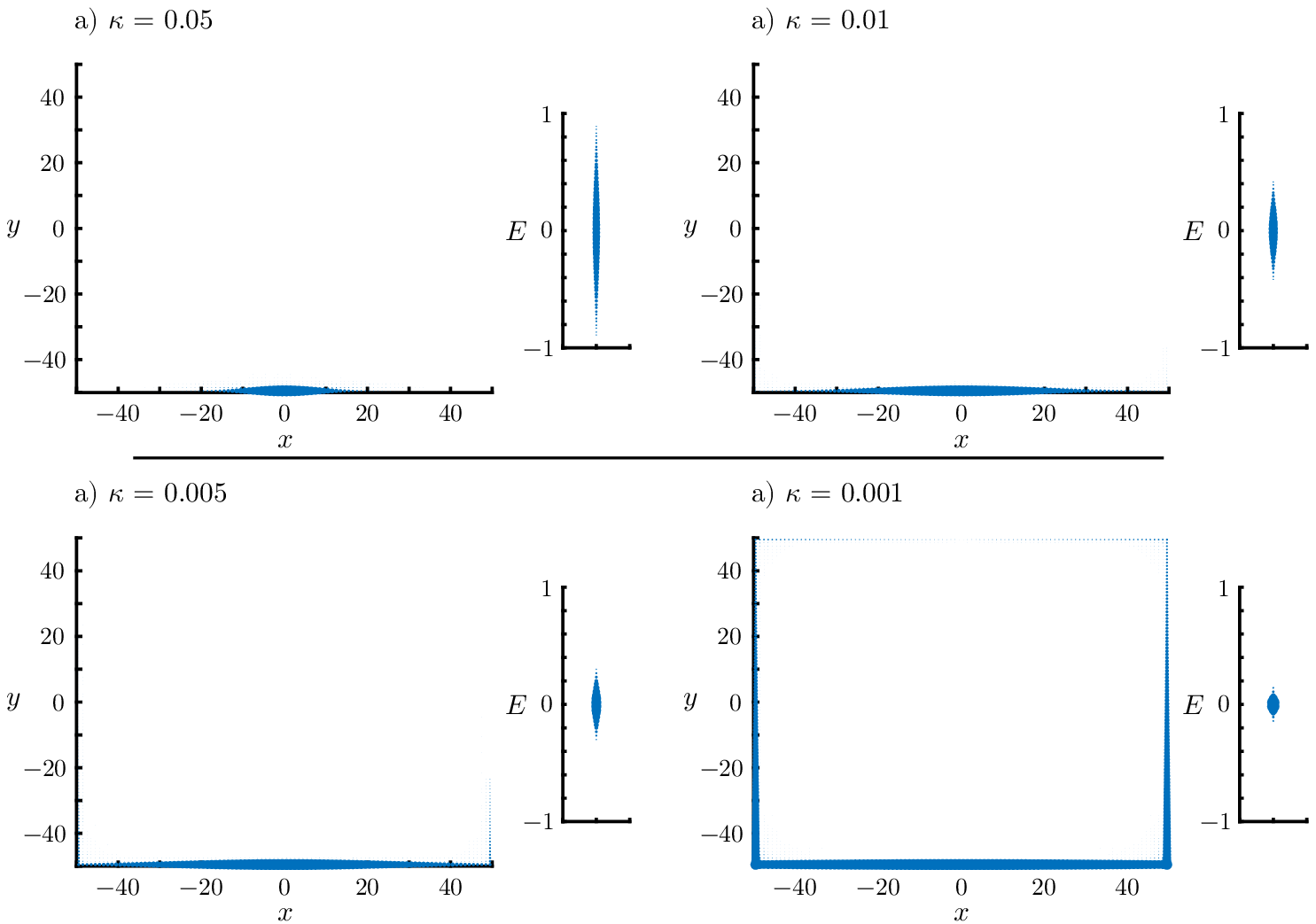}
    \caption{ Edge states for a Chern insulator found with different values of $\kappa$ in the Quadratic pseudospectra.  Each panel shows the distribution (large plot) in space and the distribution in energy (small plat) for a state selected as an eigenvalue of $Q_{\boldsymbol{\lambda}}(\kappa X, \kappa Y,H)$ for a fixed value of $\boldsymbol{\lambda} = (x,y,E)=(49,0,0)$ as $\kappa$ varies.
     \label{fig:ChernBig_localize_states}}
\end{figure}

We illustrate the quadratic and Clifford pseudospectra on a standard model of a Chern insulator \cite{konig_quantum_2008}. The real-space tight-binding model for a single spin sector of HgTe consists of a square lattice with a copy of $\mathbf{C}^2$ at each site.  That is, our Hilbert space is $\ell^2(\mathbb{Z}^2) \otimes \mathbf{C}^2$. Thus our default lattice constant is $1$ so working with $\kappa X$ and $\kappa Y$ and resets the lattice constant to $\kappa$ in units of the Hamiltonian (in this case, energy).
The on-site term in the Hamiltonian $H$ is set to
\begin{equation*}
    (C-4D)I_2 + (M-4B)\sigma_z.
\end{equation*}
There are also nearest neighbor hoping terms.  Going east these are
\begin{equation*}
    DI_2 + B\sigma_z  +A\sigma_x ,
\end{equation*}    
going north
\begin{equation*}
    DI_2 + B\sigma_z  - A\sigma_y,
\end{equation*}    
and in the remaining directions the Hermitian conjugates. Here we set $A=1$, $B=-1$, $C=0$, $D=0$ and $M =-2$ to match with the
study involving the Bott index in \cite{LorHastHgTe}, and this choice of parameters guarantees the model to have a Chern number of $-1$. The bulk spectrum is the two intervals between $\pm 1$ and $\pm 6$. 

We impose open boundary conditions and first look at a small system, just $20$-by-$20$ sites. Since $\|[H,X+iY]\| \approx 2.79$ we expect that unless we set $\kappa$ to be well less than $1$ there will be significant differences between the two pseudospectra. We also do not want $\kappa \approx 0$  since that would cause both pseudospectra to  just reflect the system's energy spectrum. In Figure \ref{fig:Chern_both_PS} we use $\kappa=0.5$.  The Clifford pseudospectrum goes to zero near the boundary when at zero energy; this is expected, as it is only when an eigenvalue of the localizer crosses zero that we can see a change in index \cite{LoringPseudospectra}.  The quadratic pseudospectrum stays relatively large, but still bounded by Eq.\ (\ref{difference_quad_Cliff_2D}).

Looking at smaller $\kappa$ will tell us more about edge states that are really localized in energy.  However, we need to look to a larger system to avoid the bulk blending into the edges.  In Figure \ref{fig:ChernBig_both_PS} we use $\kappa=0.05$ while examining a 100-by-100 lattice.  The two pseudospectra are now much closer to each other.

For every choice of $x$, $y$ and $\kappa$ that leads to small value of $\mu_{(x,y,0)}^{Q}(\kappa X,\kappa Y,H)$ there is an associated
unit vector $\boldsymbol{v}$.  This can be easily computed as an eigenvector of
$Q_{\boldsymbol{\lambda}}(\kappa X,\kappa Y,H)$ or as a right singular vector of
\begin{equation*}
   \left[\begin{array}{c}
\kappa(X-x)\\
\kappa(Y-y)\\
H-E
\end{array}\right]
\end{equation*}
as in Prop.~\ref{prop:quad_gap_four_ways}.  Figure~\ref{fig:ChernBig_localize_states} shows the nature of this
state for $(x,y,E)=(49,0,0)$, which represents a point in the middle of the bottom of the lattice, slightly in from the edge.  The system is centered at $0$ and extends in both directions from $-49.5$ to $+49.5$.  This state is computed for various values
of $\kappa$.  As expected, smaller $\kappa$ results in better localization in energy and more dispersion in position.

\section{The local nature of the quadratic pseudospectrum \label{sec:locality}}

As the quadratic pseudospectrum should be relatively easier to compute numerically than its sibling, the Clifford pseudospectrum, we want as many shortcuts and optimizations as possible for computing it.  In \cite{loring2019GuideBottLocal} it was shown that truncating a system spatially had little effect on the Clifford pseudospectrum, so long as the truncation happened well away from the probe-point $(x,y)$.  Here we establish a similar bound on the effect spacial truncation can have on the quadratic pseudospectrum.

We will assume only that the first $d$ Hermitian matrices commute with each other.  We call these $X_j$ to suggest these are position observables, but that is not important.  The last  Hermitian matrix we call $H$.  

For simplicity, we will assume always $\boldsymbol{\lambda}=\boldsymbol{0}$. We can form the Hermitian matrix
\begin{equation*}
    Z =\sqrt{X_{1}^{2}+\cdots X_{d}^{2}}
\end{equation*}
which we can think of as Euclidean distance from the origin in the spacial coordinates.  In what follows, we make the simplifying assumption that $Z$ is invertible. This just means that our model cannot have a site located at exactly the origin.  Since the quadratic pseudospectrum is Lipschitz in position we can easily work around this, if needed, to get estimates that work without this
assumption.

We deal with truncation in two steps.  First we contemplate what can happen to the quadratic pseodospectrum at $\boldsymbol{0}$ if we alter how $H$ acts on parts of the Hilbert space far away from the origin.  In particular, if we set $H$ to act as zero out there using some physical assumption of locality in the system being described.  The second step is to deal with the effect of excising that part of the Hilbert space where $H$ is now acting trivially.

The first part is in the following Theorem.  This will have applications beyond truncation, as it tells us that the quadratic pseudospectrum is generally unaffected by defects far away from the the probe location.

\begin{thm}
Assume $X_{1},\dots,X_{d}$ and $H$ are Hermitian matrices of the
same size, that the $X_{j}$ commute with each other, and that 
\begin{equation*}
Z=\sqrt{X_{1}^{2}+\cdots X_{d}^{2}}
\end{equation*}
 is invertible. If $H_{0}$ is Hermitian and 
\begin{equation}
\left\Vert Z^{-1}\left(HH_{0}+HH_{0}+H_{0}^{2}\right)Z^{-1}\right\Vert \leq C
\label{eq:assumpion_on_H0}
\end{equation}
for some constant $C$ with $C<1$, then 
\begin{equation*}
(1-C)^{\frac{1}{2}}\mu_{\boldsymbol{0}}^{Q}(X_{1},\dots,X_{d},H)\leq\mu_{\boldsymbol{0}}^{Q}(X_{1},\dots,X_{d},H+H_{0})\leq(1+C)^{\frac{1}{2}}\mu_{\boldsymbol{0}}^{Q}(X_{1},\dots,X_{d},H).
\end{equation*}
\end{thm}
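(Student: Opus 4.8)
The plan is to reduce the whole statement to a single Loewner (operator) inequality between positive semidefinite matrices and then invoke the monotonicity and positive homogeneity of the smallest eigenvalue. By Proposition~\ref{prop:quad_gap_four_ways}, parts (3)--(4), applied at $\boldsymbol{\lambda}=\boldsymbol{0}$, the square of the quadratic gap is the smallest eigenvalue of the positive semidefinite matrix
\begin{equation*}
Q_{\boldsymbol{0}}(X_{1},\dots,X_{d},H)=\sum_{j=1}^{d}X_{j}^{2}+H^{2}=Z^{2}+H^{2},
\end{equation*}
and likewise $\left(\mu_{\boldsymbol{0}}^{Q}(X_{1},\dots,X_{d},H+H_{0})\right)^{2}=\sigma_{\min}\!\left(Z^{2}+(H+H_{0})^{2}\right)$. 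So it suffices to compare the smallest eigenvalues of $Z^{2}+H^{2}$ and $Z^{2}+(H+H_{0})^{2}$.

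First I would form the difference of the two quadratic composite operators,
\begin{equation*}
E:=Q_{\boldsymbol{0}}(X_{1},\dots,X_{d},H+H_{0})-Q_{\boldsymbol{0}}(X_{1},\dots,X_{d},H)=(H+H_{0})^{2}-H^{2}=H_{0}H+HH_{0}+H_{0}^{2},
\end{equation*}
which is exactly the Hermitian matrix appearing inside the norm in hypothesis~(\ref{eq:assumpion_on_H0}). (The expression in (\ref{eq:assumpion_on_H0}) should read $H_{0}H+HH_{0}+H_{0}^{2}$.) Note that $E$ is Hermitian, since $H$ and $H_{0}$ are.

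The key step, and what I expect to be the pivotal idea rather than a routine calculation, is to convert the norm bound $\|Z^{-1}EZ^{-1}\|\leq C$ into a two-sided Loewner bound on $E$ itself. Because $E$ is Hermitian and $Z$ is positive definite, the matrix $Z^{-1}EZ^{-1}$ is Hermitian, so the norm bound is equivalent to $-CI\leq Z^{-1}EZ^{-1}\leq CI$; evaluating these quadratic forms at vectors of the form $Z\boldsymbol{u}$ (i.e.\ conjugating by $Z$) then yields
\begin{equation*}
-CZ^{2}\leq E\leq CZ^{2}.
\end{equation*}
This is where the particular $Z^{-1}$-weighting in the hypothesis earns its keep: a naive bound $\|E\|\leq C$ would only give an additive perturbation of the gap, whereas the weighted bound produces the multiplicative factors $(1\pm C)$ against the whole operator $Z^{2}+H^{2}$.

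Finally I would assemble the sandwich. Adding $Z^{2}+H^{2}$ throughout and then discarding a favorable multiple of $H^{2}$ (allowed since $C\geq0$ and $H^{2}\geq0$) gives the chain
\begin{equation*}
(1-C)(Z^{2}+H^{2})\leq(1-C)Z^{2}+H^{2}\leq Z^{2}+(H+H_{0})^{2}\leq(1+C)Z^{2}+H^{2}\leq(1+C)(Z^{2}+H^{2}).
\end{equation*}
Since $C<1$ forces $1-C>0$, all matrices here are positive semidefinite, so $\sigma_{\min}$ coincides with the smallest eigenvalue; by its monotonicity on positive semidefinite matrices and its positive homogeneity I obtain
\begin{equation*}
(1-C)\,\sigma_{\min}(Z^{2}+H^{2})\leq\sigma_{\min}\!\left(Z^{2}+(H+H_{0})^{2}\right)\leq(1+C)\,\sigma_{\min}(Z^{2}+H^{2}).
\end{equation*}
Taking square roots of these nonnegative quantities and recalling the identifications from Proposition~\ref{prop:quad_gap_four_ways} produces the claimed inequalities, completing the argument.
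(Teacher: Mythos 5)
Your proof is correct and follows essentially the same route as the paper's: both convert the weighted hypothesis into the two-sided Loewner bound $-CZ^{2}\leq HH_{0}+H_{0}H+H_{0}^{2}\leq CZ^{2}$ by conjugating with $Z$, and then absorb the harmless multiple of $H^{2}$ to obtain the operator sandwich $(1-C)\,Q_{\boldsymbol{0}}(X_{1},\dots,X_{d},H)\leq Q_{\boldsymbol{0}}(X_{1},\dots,X_{d},H+H_{0})\leq(1+C)\,Q_{\boldsymbol{0}}(X_{1},\dots,X_{d},H)$. The only divergence is in the final, routine step --- the paper inverts the sandwich and uses $\sigma_{\min}(Q)=\Vert Q^{-1}\Vert^{-1}$, while you conclude directly from the monotonicity and positive homogeneity of the smallest eigenvalue under the Loewner order, which is marginally cleaner and sidesteps the (here automatic) invertibility of the quadratic composite operator.
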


\begin{proof}
With the above definition of $Z$ we obtain
\begin{equation*}
  Q_{\boldsymbol{0}}(X_{1},\dots,X_{d},H)=Z^{2}+H^{2}  
\end{equation*}
and 
\begin{equation*}
 Q_{\boldsymbol{0}}(X_{1},\dots,X_{d},H+H_{0})=Z^{2}+H^{2}+HH_{0}+H_{0}H+H_{0}^{2}.
 \end{equation*}
Multiplying (\ref{eq:assumpion_on_H0}) by $Z$ on both sides leads
to 
\begin{equation}
-CZ^{2}\leq HH_{0}+HH_{0}+H_{0}^{2}\leq CZ^{2}.\label{eq:assumption_revised}
\end{equation}
 Thus
 \begin{align*}
 Q_{\boldsymbol{\text{0}}}(X_{1},\dots,X_{d},H+H_{0})
 & \leq Z^{2}+H^{2}+C|Z|^{2} \\
 &\leq (1+C)\left(Z^{2}+H^{2}\right)\\
 &= (1+C)Q_{\boldsymbol{0}}(X_{1},\dots,X_{d},H)
\end{align*}
and 
\begin{align*}
Q_{\boldsymbol{0}}(X_{1},\dots,X_{d},H+H_{0})	&\geq Z^{2}+H^{2}-C|Z|^{2}\\
	&\geq(1-C)\left(Z^{2}+H^{2}\right) \\
	&=(1-C)Q_{\boldsymbol{0}}(X_{1},\dots,X_{d},H)
\end{align*}
so 
\begin{equation*}
	(1-C)Q_{\boldsymbol{0}}(X_{1},\dots,X_{d},H)\leq Q_{\boldsymbol{\text{0}}}(X_{1},\dots,X_{d},H+H_{0})\leq(1+C)Q_{\boldsymbol{0}}(X_{1},\dots,X_{d},H).
\end{equation*}
 Since $(1/x)$ reverses order we have 
 \begin{align*}
 (1+C)^{-1}\left(Q_{\boldsymbol{0}}(X_{1},\dots,X_{d},H)\right)^{-1}
 &\leq Q_{\boldsymbol{0}}(X_{1},\dots,X_{d},H+H_{0})^{-1} \\
 &\leq(1-C)^{-1}\left(Q_{\boldsymbol{0}}(X_{1},\dots,X_{d},H)\right)^{-1}.
\end{align*}
This means
\begin{align*}
 (1+C)^{-1}\left\Vert \left(Q_{\boldsymbol{0}}(X_{1},\dots,X_{d},H)\right)^{-1}\right\Vert &\leq \left\Vert Q_{\boldsymbol{0}}(X_{1},\dots,X_{d},H+H_{0})^{-1}\right\Vert \\
 & \leq(1-C)^{-1}\left\Vert \left(Q_{\boldsymbol{0}}(X_{1},\dots,X_{d},H)\right)^{-1}\right\Vert
\end{align*}
 and finally 
\begin{align*}
  (1-C)^{\frac{1}{2}}\left\Vert \left(Q_{\boldsymbol{0}}(X_{1},\dots,X_{d},H)\right)^{-1}\right\Vert ^{-\frac{1}{2}} &\leq \left\Vert Q_{\boldsymbol{0}}(X_{1},\dots,X_{d},H+H_{0})^{-1}\right\Vert ^{-\frac{1}{2}} \\
  &\leq(1+C)^{\frac{1}{2}}\left\Vert \left(Q_{\boldsymbol{0}}(X_{1},\dots,X_{d},H)\right)^{-1}\right\Vert ^{-\frac{1}{2}}.
\end{align*}
 
\end{proof}

\begin{thm}
Assume $X_{1},\dots,X_{d}$ are diagonal matrices and $H$ is a Hermitian
matrix, all in $\boldsymbol{M}_{n}(\mathbb{C})$, and set 
\begin{equation*}
Z=\sqrt{X_{1}^{2}+\cdots X_{d}^{2}}.
\end{equation*}
Let $\mathcal{H}_{\rho}$ denote the Hilbert subspace of $\mathbb{C}^{n}$
corresponding to standard basis vectors where $Z$ takes value at
most $\rho$. Let $X_{1}^{\rho},\dots,X_{d}^{\rho}$ and $H^{\rho}$
denote the compressions to $\mathcal{H}_{\rho}$. If $H$ acts trivially
on the complement of $\mathcal{H}_{\rho}$ then 
\begin{equation*}
Q_{\boldsymbol{0}}(X_{1},\dots,X_{d},H)=\min\left(\rho,Q_{\boldsymbol{0}}(X_{1}^{\rho},\dots,X_{d}^{\rho},H^{\rho})\right).
\end{equation*}

\end{thm}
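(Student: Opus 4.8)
The plan is to reduce the claim to a statement about the smallest eigenvalue of a single positive semidefinite matrix and then exploit a block decomposition that the hypotheses force. Since $\boldsymbol{\lambda}=\boldsymbol{0}$ and the $X_{j}$ commute (indeed are diagonal), the computation in the preceding theorem gives $Q_{\boldsymbol{0}}(X_{1},\dots,X_{d},H)=Z^{2}+H^{2}$. This matrix is positive semidefinite, so $\sigma_{\min}$ is just its smallest eigenvalue, and by the definition of the quadratic gap (equivalently Proposition~\ref{prop:quad_gap_four_ways}(4)) the quantity $\mu_{\boldsymbol{0}}^{Q}$ is the square root of that smallest eigenvalue. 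So it suffices to express the smallest eigenvalue of $Z^{2}+H^{2}$ in terms of the analogous object for the compressed operators and the excised coordinates.

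First I would record the block structure relative to the orthogonal splitting $\mathbb{C}^{n}=\mathcal{H}_{\rho}\oplus\mathcal{H}_{\rho}^{\perp}$, writing $P$ for the coordinate projection onto $\mathcal{H}_{\rho}$. Because $X_{1},\dots,X_{d}$ are diagonal, $Z=\sqrt{X_{1}^{2}+\cdots+X_{d}^{2}}$ is diagonal as well, and since $\mathcal{H}_{\rho}$ is spanned by standard basis vectors, $Z$ and $Z^{2}$ commute with $P$ and are block diagonal. The hypothesis that $H$ acts trivially on $\mathcal{H}_{\rho}^{\perp}$ says $H(I-P)=0$; taking adjoints and using $H=H^{\dagger}$ gives $(I-P)H=0$, so $H=PHP$ is block diagonal and supported entirely on $\mathcal{H}_{\rho}$, and hence so is $H^{2}$. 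Therefore $Z^{2}+H^{2}$ is block diagonal, its $\mathcal{H}_{\rho}$ block being $(Z^{\rho})^{2}+(H^{\rho})^{2}=Q_{\boldsymbol{0}}(X_{1}^{\rho},\dots,X_{d}^{\rho},H^{\rho})$ and its $\mathcal{H}_{\rho}^{\perp}$ block being the diagonal matrix whose entries are the excised values $Z_{ii}^{2}$.

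The eigenvalues of a block diagonal Hermitian matrix are the union of the eigenvalues of its blocks, so the smallest eigenvalue of $Z^{2}+H^{2}$ is the minimum of $\sigma_{\min}\bigl(Q_{\boldsymbol{0}}(X_{1}^{\rho},\dots,X_{d}^{\rho},H^{\rho})\bigr)$ and $\min_{i}Z_{ii}^{2}$ over the excised coordinates. By the definition of $\mathcal{H}_{\rho}$ every excised $Z_{ii}$ exceeds $\rho$, so the complementary block contributes only eigenvalues larger than $\rho^{2}$; taking square roots then presents $\mu_{\boldsymbol{0}}^{Q}(X_{1},\dots,X_{d},H)$ as the minimum of the truncated gap and a number strictly greater than $\rho$, which is $\min\bigl(\rho,\mu_{\boldsymbol{0}}^{Q}(X_{1}^{\rho},\dots,X_{d}^{\rho},H^{\rho})\bigr)$. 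The one place I would state with care — and the step I expect to be the main obstacle rather than the purely mechanical splitting — is this interface bookkeeping: one must note that it is precisely the coordinate (diagonal) nature of $Z$ that makes it block diagonal, that Hermiticity is what upgrades the one-sided condition $H(I-P)=0$ to the genuine support condition $H=PHP$, and that the excised block only ever supplies eigenvalues above $\rho^{2}$. Since the smallest excised $Z_{ii}$ is strictly larger than $\rho$, the resulting identity is sharp exactly in the regime $\mu_{\boldsymbol{0}}^{Q}(X_{1}^{\rho},\dots,H^{\rho})\le\rho$, which is the physically relevant case of a small quadratic gap sitting well inside the retained region; more generally the same computation yields the two-sided control $\min(\rho,\mu_{\boldsymbol{0}}^{Q}(X_{1}^{\rho},\dots,H^{\rho}))\le\mu_{\boldsymbol{0}}^{Q}(X_{1},\dots,H)\le\mu_{\boldsymbol{0}}^{Q}(X_{1}^{\rho},\dots,H^{\rho})$.
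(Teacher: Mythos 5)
Your proof is correct, and it is the rigorous form of exactly the argument the paper intends: split $\mathbb{C}^{n}=\mathcal{H}_{\rho}\oplus\mathcal{H}_{\rho}^{\perp}$, observe that $Q_{\boldsymbol{0}}(X_{1},\dots,X_{d},H)=Z^{2}+H^{2}$ is block diagonal for this splitting (your point that Hermiticity upgrades the one-sided condition $H(I-P)=0$ to $H=PHP$ is the right way to justify this), and read off the spectrum as the union of the spectrum of $Q_{\boldsymbol{0}}(X_{1}^{\rho},\dots,X_{d}^{\rho},H^{\rho})$ with the excised diagonal entries $Z_{ii}^{2}$. The paper's two-sentence proof is the same decomposition written carelessly: it describes each excised site as contributing a summand $\sum_{j}\lambda_{j}\Gamma_{j}+0\Gamma_{d+1}$ with eigenvalue pair $\pm\bigl(\sum_{j}\lambda_{j}\bigr)^{1/2}$, which is localizer language imported into a statement about the quadratic operator (and even for the localizer the formula should read $\pm\bigl(\sum_{j}\lambda_{j}^{2}\bigr)^{1/2}$); for the positive semidefinite $Q_{\boldsymbol{0}}$ each excised site contributes the single eigenvalue $\sum_{j}\lambda_{j}^{2}=Z_{ii}^{2}$, exactly as in your bookkeeping. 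Likewise the theorem's displayed conclusion writes the matrix $Q_{\boldsymbol{0}}$ where the scalar $\mu_{\boldsymbol{0}}^{Q}$ must be meant; your reading via Proposition~\ref{prop:quad_gap_four_ways} is the correct one.

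Your closing caveat is a genuine catch rather than pedantry. What the block decomposition actually yields is $\mu_{\boldsymbol{0}}^{Q}(X_{1},\dots,X_{d},H)=\min\bigl(m,\,\mu_{\boldsymbol{0}}^{Q}(X_{1}^{\rho},\dots,X_{d}^{\rho},H^{\rho})\bigr)$ with $m=\min\{Z_{ii}\,:\,i\ \text{excised}\}>\rho$, so the equality asserted in the theorem holds in the regime $\mu_{\boldsymbol{0}}^{Q}(X_{1}^{\rho},\dots,H^{\rho})\le\rho$ --- the intended situation of a state localized well inside the truncation radius --- but fails otherwise. For instance, take $d=1$, $X_{1}=\mathrm{diag}(1,10)$, $H=\mathrm{diag}(10,0)$, $\rho=5$: then $H$ acts trivially on the excised site, the full gap is $\min(\sqrt{101},10)=10$, yet $\min\bigl(\rho,\mu_{\boldsymbol{0}}^{Q}(\text{truncated})\bigr)=5$. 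In general only your two-sided bound $\min\bigl(\rho,\mu_{\boldsymbol{0}}^{Q}(\text{truncated})\bigr)\le\mu_{\boldsymbol{0}}^{Q}(\text{full})\le\mu_{\boldsymbol{0}}^{Q}(\text{truncated})$ survives, and that is also what the application to truncation error actually requires; the paper's one-line proof glosses over this distinction entirely.
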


\begin{proof}
The difference between \begin{equation*}
\mu_{\boldsymbol{0}}^{Q}(X_{1},\dots,X_{d},H)
\end{equation*} 
and 
\begin{equation*}
\mu_{\boldsymbol{0}}^{Q}(X_{1}^{\rho},\dots,X_{d}^{\rho},H^{\rho})
\end{equation*}
is the addition of many small summands of the form
\begin{equation*}
\sum_{j=1}^{d}\lambda_{j}\Gamma_{j}+0\Gamma_{d+1}.
\end{equation*}
Each contributes  two points to the spectrum, specifically
$\pm\Big(\sum_{j=1}^{d}\lambda_{j}\Big)^{\frac{1}{2}}$.
\end{proof}

\section{Conclusion \label{sec:conc}}

In conclusion, we have established the necessary definitions and theorems for both understanding the utility of the quadratic composite operator and quadratic pseudospectrum in physical systems, as well as the relationship between the quadratic and Clifford pseudospectra. Moreover, we have proven that the quadratic pseudospectrum is local, which has two important consequences. First, the numerical difficultly in calculating the quadratic pseudospectrum plateaus beyond a certain system size, where the error in the calculation incurred through the truncation becomes negligible.  
Second, this provides us a new tool to related bound states in a large system to bound states in a more easily understood truncated system.  This was already possible using the truncation bound known for the localizer in \cite{loring2019GuideBottLocal}, but the bounds in Sect.\ \ref{sec:locality} are simpler.

On a more fundamental level, the quadratic composite operator and pseudospectrum represent the most straightforward method for approaching systems with incompatible observables, as it both minimizes the eigen-error in the joint approximate spectrum and does not increase the computational complexity of the system. If the system is suspected of possessing non-trivial $K$-theory, this can then be calculated using the localizer where the quadratic gap is maximized, which will typically coincide with large localizer gaps due to Prop.\ \ref{prop:Diff_PS_bound}. Similarly, any topological boundary-localized states that exist in systems with non-trivial $K$-theory in their bulk can also be found near minima in the system's quadratic gap. In particular, this may be especially relevant for studying systems whose topology is not yet known to be connected to another pseudospectra, such as non-Hermitian systems that are known to possess non-trivial topology both theoretically \cite{lee_anomalous_2016,leykam_edge_2017,shen_topological_2018,cerjan_effects_2018,kunst_biorthogonal_2018,yao_edge_2018,gong_topological_2018,wojcik_homotopy_2020,bergholtz_exceptional_2021} and experimentally \cite{zeuner_observation_2015,weimann_topologically_2017,kremer_demonstration_2019,cerjan_experimental_2019}.

\section*{Acknowledgements}

T.L. acknowledges support from the National Science Foundation, grant DMS-2110398.
A.C. and T.L. acknowledge support from the Center for Integrated Nanotechnologies, an Office of Science User Facility operated for the U.S.\ Department of Energy (DOE) Office of Science, and the Laboratory Directed Research and Development program at Sandia National Laboratories. Sandia National Laboratories is a multimission laboratory managed and operated by National Technology \& Engineering Solutions of Sandia, LLC, a wholly owned subsidiary of Honeywell International, Inc., for the U.S.\ DOE's National Nuclear Security Administration under contract DE-NA-0003525. The views expressed in the article do not necessarily represent the views of the U.S.\ DOE or the United States Government. F. V. acknowledges support from the Scientific Computing Innovation Center of UNAH, as part of the researh project PI-063-DICIHT.

\bibliographystyle{plain}
\bibliography{sample}

\end{document}